\newtheorem{lem}{Lemma}
\newtheorem{thm}{Theorem}
\newtheorem{defn}{Definition}
\newtheorem{cor}{Corollary}
\newtheorem{prob}{Problem}
\def\mb{\mathbf}
\def\mc{\mathcal}
\DeclareMathOperator*{\argmin}{argmin}
\begin{document}
\title{Minimal Sufficient Conditions for Structural Observability/Controllability of Composite Networks via Kronecker Product}
\author{Mohammadreza Doostmohammadian and Usman A. Khan, \textit{Senior Member}, \textit{IEEE}
\thanks{Mohammadreza Doostmohammadian is with the Mechanical Engineering Department at Semnan University, Semnan, Iran, email: \texttt{doost@semnan.ac.ir}. Usman A. Khan is with the Electrical and Computer Engineering Department at Tufts University, Medford, USA, email: \texttt{khan@ece.tufts.edu}. The work of UAK has been partially supported by NSF under awards \#1350264, \#1903972, and \#1935555.}}
\maketitle

\begin{abstract}
	In this paper, we consider composite networks formed from the Kronecker product of smaller networks. We find the observability and controllability properties of the product network from those of its constituent smaller networks. The overall network is modeled as a \textit{Linear-Structure-Invariant (LSI)} dynamical system where the underlying matrices have a fixed zero/non-zero structure but the non-zero elements are potentially time-varying. This approach allows to model the system parameters as free variables whose values may only be known within a certain tolerance. We particularly look for minimal sufficient conditions\footnote{We emphasize that a minimal sufficient condition is not necessarily a necessary and sufficient condition. In fact, it implies that among all sufficient conditions that may result in an event, this condition is the least conservative but usually is not necessary; see~\cite{mackie1965causes} for details.} on the observability and controllability of the composite network, which have a direct application in distributed estimation and in the design of networked control systems. The methodology in this paper is based on the structured systems analysis and graph theory, and therefore, the results are generic, i.e., they apply to almost all non-zero choices of free parameters. We show the controllability/observability results for composite product networks resulting from full structural-rank systems and self-damped networks. We provide an illustrative example of estimation based on Kalman filtering over a composite network to verify our results.
	
	\textit{Index Terms} -- Distributed Estimation, Sensor Networks, Linear Systems, Structural Controllability/Observability,  Graph Dilation/Contraction, Kronecker Product.
\end{abstract}

\section{Introduction} \label{sec_intro}
\IEEEPARstart{C}{ontrollability}  and observability of networked systems arise in situations where a group of interconnected devices are influenced or observed by an external entity; examples range from classical applications in robotic systems, multi-agent networks, and sensor networks, to more recent emerging areas that include social networks, Internet-of-Things (IoT) and Cyber-Physical Systems (CPS)~\cite{gao2012networks,forti2017distributed,fawzi2014secure,miorandi2012internet,gorod2008system,sadreazami2017distributed,chen2018internet,guan2017distributed}. In all of these applications, there is a layered network that connects individuals and objects leading to a significant interest in composite networks~\cite{chapman2014controllability,carvalho2017composability,rech1991structural,li1996g,davison1977connectability}. The main theme in these works is to investigate the observability and controllability of large-scale networks resulting from the \textit{product} of smaller networks. Two main network products that arise in real applications are \textit{Cartesian product} and \textit{Kronecker product}\footnote{Kronecker product of two graphs is also referred to as tensor product or direct product in the literature~\cite{hammack2011handbook}.}. We refer interested readers to~\cite{hammack2011handbook} for a better understanding of different graph products. An example of Cartesian product vs. Kronecker product is given in Fig.~\ref{fig_graphproduct}. As it can be seen from the figure, Kronecker product, as compared to Cartesian product, typically results in more complicated composite graphs that may even be bipartite in terms of connectivity (see the example in Fig.~\ref{fig_graphproduct}). Different applications of Kronecker composite networks may be found in~\cite{leskovec2008dynamics}. We are particularly interested in the observability and controllability of Kronecker composite networks, whereas in contrast most of the related literature studies composite networks via Cartesian products~\cite{chapman2014controllability,carvalho2017composability,rech1991structural,li1996g,davison1977connectability}. The Kronecker composite networks find direct applications in networked control system~\cite{hu2008stability,ling2003soft,lee2015stability}, distributed Fault Detection and Isolation (FDI) ~\cite{davoodi2013distributed}, distributed detection~\cite{cattivelli2011distributed}, and distributed estimation over sensor networks~\cite{doostmohammadian2018structural,jstsp14,nuno-suff.ness,park2017design, jstsp,usman_cdc:10,acc13}. For example, in distributed estimation, the overall distributed system can be considered as Kronecker product of the system digraph and the sensor/estimator network.
\begin{figure}[!t]
	\centering
	{\includegraphics[width=3.3in]{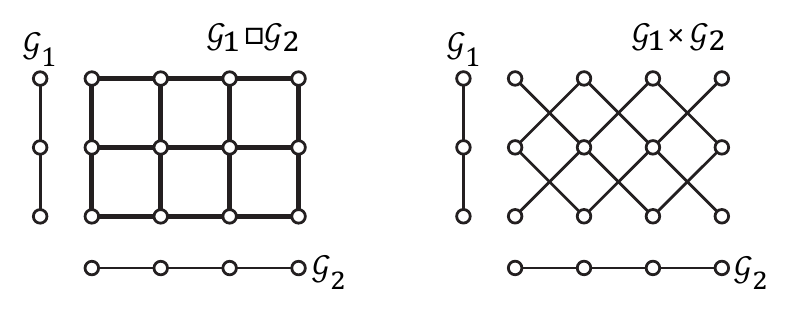}}
	\caption{ (Left) The Cartesian product of two line graphs~$\mc{G}_1$ and~$\mc{G}_2$. (Right) The Kronecker product of the same graphs.}
	\label{fig_graphproduct}
\end{figure}

In this paper, we discuss the \textit{structural} controllability and observability of the composite networks representing dynamical systems. We assume the underlying dynamical system to be LSI which is prevalent, among others, in social networks~\cite{jstsp14,friedkin2006structural,ISJ,French}. In LSI systems, the system structure (representing the network) is fixed while the system parameters (and consequently the link weights in the network) may vary over time. In other words, the system dynamics generated from interaction of system parameters is time-invariant while the values of the system parameters may change as free variables. Such LSI systems may also arise in linearization of nonlinear systems~\cite{nonlin}. It is known that many properties of such systems are generic, implying that they do not depend on the exact numerical values of system parameters, rather the system structure~\cite{woude:03}. Controllability and observability are examples of such properties. The structural controllability is primarily  introduced in~\cite{lin} and further developed in~\cite{pequito2015framework,monshizadeh2014zero,burgarth2013zero,mousavi2017structural,acc13_mesbahi,mousavi2019strong,trefois2015zero,menara2018structural}. In this direction, the concept of structural controllability holds for almost all choices of network link weights~\cite{pequito2015framework}, while the concept of \textit{strong} structural controllability holds for all choices of the system parameters and network link weights~\cite{monshizadeh2014zero,burgarth2013zero,mousavi2017structural,acc13_mesbahi,mousavi2019strong}. In case of strong structural controllability, for example, see the concept of qualitative class of matrices defined in~\cite{monshizadeh2014zero}. The analysis of strong structural controllability is mainly through two key notions: the zero forcing sets~\cite{monshizadeh2014zero,burgarth2013zero,mousavi2017structural,mousavi2019strong}, and the
constrained bipartite matching\footnote{The constrained bipartite matching is defined over bipartite representation of the network. In the bipartite representation, define a~$t$-matching as a set of~$t$ links such that no two of them share a node. A~$t$-matching is called constrained if there is no other~$t$-matching with the same matched nodes. For more information refer to~\cite{trefois2015zero}.}~\cite{acc13_mesbahi,trefois2015zero}. In~\cite{monshizadeh2014zero} a one-to-one correspondence between the set
of driver nodes (the nodes rendering the network controllable) of directed graphs and the zero forcing sets is established. One basic assumption in structural controllability
is the independency of all free variables. In this direction, the controllability of undirected networks with symmetric adjacency matrices is studied in~\cite{mousavi2017structural,menara2018structural} as a special case of having dependency in system parameters. The results are particularly of interest as the undirected link in a dynamic network may represent a feedback loop. In~\cite{mousavi2019strong}, the authors explore the strong structural controllability under network perturbations, where they characterize the addition or removal of maximal set of links for which the strong structural controllability is still preserved. In another line of research, Ref.~\cite{cowan2012nodal} claims that nodal dynamics, not degree distributions, determine the structural controllability of complex networks. The authors define power dominating sets (PDS) and structural control network as key concepts concerning network controllability. Similarly, the work in~\cite{zhou2017minimal} shows  that the minimal number of the inputs/outputs of a system is equal to the maximum geometric multiplicity of its state transition matrix, which is consistent with~\cite{cowan2012nodal}. The results in~\cite{zhou2017minimal} are in sharp contrast to minimum controllability/observability problem in some literature, which claim that the sparsest input/output matrix subject to system controllability/observability is NP-hard and even impossible to be approximated within a multiplicative factor~\cite{zhou2017minimal}.

Considering the Kronecker composite network~$\mc{G}_1 \times \mc{G}_2$, an important problem is to design the network~$\mc{G}_1$ representing the sensor network or the control actuation network, based on the properties of the network~$\mc{G}_2$ representing the underlying dynamical system. In this direction, what is missing from the literature, including~\cite{carvalho2017composability,chapman2014controllability} and references therein, is the effect of network S-rank (or structural-rank) on the composability properties. In this paper, we show that the controllability and observability properties of the Kronecker composite network to a great extent depend on the S-rank of the underlying networks and, further, their strong connectivity. We find the minimal  conditions to guarantee observability/controllability of the composite network and provide sufficient conditions on the structure of the network~$\mc{G}_1$ and the observation/input matrix~$\mc{H}_C$.
First, similar to~\cite{pequito2015framework}, we find the minimal conditions on the observability/controllability of the general constituent networks, and then extend the results to the product network using structured systems theory. The main results on observability/controllability of the product network	are based on the S-rank of the constituent networks and particularly the case of \textit{self-damped} constituent networks.
In particular, for proving our results we adopt the structural methodology in~\cite{rein_book}. To the best of our knowledge, no result in the literature is developed on the structural observability and controllability of the composite Kronecker product networks, and the few related papers study the case of Cartesian product network~\cite{carvalho2017composability,chapman2014controllability}. We again remind the reader that the above problem on observability/controllability of Kronecker product finds signal processing applications such as in distributed estimation/detection~\cite{doostmohammadian2018structural,jstsp14,nuno-suff.ness,park2017design, jstsp,usman_cdc:10} and control applications in networked control systems~\cite{hu2008stability,ling2003soft,lee2015stability}.

The rest of the paper is organized as follows. In Section~\ref{sec_prob}, we state the preliminaries on graphs and structured systems theory and formulate the problem. In Section~\ref{sec_cent}, we discuss the main results on minimal conditions for observability/controllability and extend these results to minimal sufficient observability/controllability of the Kronecker composite network in Section~\ref{sec_main}. In Section~\ref{sec_app}, we provide a representative application in distributed estimation. In Section~\ref{sec_sim}, we provide an illustrative example and simulations. Finally, Section~\ref{sec_con} states the concluding remarks.

\section{Preliminaries and Problem Formulation} \label{sec_prob}

\subsection{Problem Statement}
In this paper, we consider the composite networks based on Kronecker product, represented by~$\mc{G}_1 \times \mc{G}_2$. In the coming sections, we refer to~$\mc{G}_2$ as the \textit{replica} network with adjacency matrix~$A_2 \in \mathbb{R}^{n \times n}$ and ~$\mc{G}_1$ as the \textit{factor} network with adjacency matrix~$A_1 \in \mathbb{R}^{N \times N}$. The matrix~$A_1 \otimes A_2 \in \mathbb{R}^{nN \times nN}$ represents the adjacency matrix of the composite network (or \textit{network-of-networks} as referred by~\cite{chapman2014controllability}), where~$\otimes$ is  the Kronecker matrix product. We refer interested readers to~\cite{hammack2011handbook,leskovec2008dynamics} for discussion on properties of Kronecker product networks. The replica network~$\mc{G}_2$ may represent a dynamical system and the factor network~$\mc{G}_1$ may represent a sensor network or a network of control actuation devices. The structured matrices~$\mc{A}_1 \in \{0,1\}^{N \times N}$ and~$\mc{A}_2 \in \{0,1\}^{n \times n}$ respectively represent the structure (0-1 pattern) of the matrices~$A_1$ and~$A_2$. The size of the networks~$\mc{G}_1$ and~$\mc{G}_2$ are respectively represented by~$N$ and~$n$. The matrix~$H$ and~$H_C$ are, respectively, the measurement/input matrix of the replica network~$\mc{G}_2$ and of the composite network,~$\mc{G}_1 \times \mc{G}_2$. Their structured representation (0-1 pattern) is denoted by~$\mc{H}$ and~$\mc{H}_C$.
In this paper, we define a network~$\mc{G}=\{\mc{V},\mc{E}\}$ where~$\mc{V}$ and~$\mc{E}$ are the set of nodes and links respectively.~$(\mc{V}_i,\mc{V}_j)$ is in the set~$\mc{E}$ if there is a link from node~$\mc{V}_i$ to node~$\mc{V}_j$. Further, define the neighborhood set of a node~$\mc{V}_j$ as~$\mc{N}(\mc{V}_j) = \{\mc{V}_i|(\mc{V}_i,\mc{V}_j) \in \mc{E}\}$.
Note that we use the well-known definition for adjacency matrix. The entry~$A_{ij}$ of the adjacency matrix~$A$ is equal to the weight of the link~$(\mc{V}_i,\mc{V}_j)$. The diagonal entries of~$A$ represent the weights of self-links of the graph~$\mc{G}$.

The problem we study is that given the replica network~$\mc{G}_2$, what are the  sufficient conditions on the factor network~$\mc{G}_1$ and the measurement/input matrix~$\mc{H}_C$ to ensure the structural observability/controllability of the composite network,~$\mc{G}_1 \times \mc{G}_2$? Particularly, we are interested in minimum number of measurements/inputs over the composite network. Mathematically, we aim to find the solution for the following problem,
\begin{prob} \label{prob}
\begin{equation} \label{eq_prob}
 \begin{aligned}
 \displaystyle
 \argmin
 \limits_{\mc{A}_1,\mc{H}_C} ~~ & |\mc{H}_C|_0 \\
 \text{s.t.} ~~ & (\mc{A}_1 \otimes \mc{A}_2,\mc{H}_C)\mbox{-observability/controllability,}
 \end{aligned}
\end{equation}
\end{prob}
\noindent where~$|\cdot|_0$ is the standard~$0$-norm definition that counts the number of non-zero elements (free entries) in the matrix (or number of~$1$'s in the structured matrix). Indeed,~$|\mc{H}_C|_0$ represents the number of outputs/inputs of the composite network. We note here that the observability or controllability constraint on~$(\mc{A}_1 \otimes \mc{A}_2,\mc{H}_C)$ is \textit{structural}, i.e, we derive the results only from the 0-1 pattern (or the structure) of the matrices,~$H_C$ and~$A_1$, and not their exact numerical values. In other words, the results we derive are \textit{generic} and they are applicable to almost all\footnote{The term ``almost all" mathematically implies that the probability of randomly choosing the free variables for which the specific property does not hold is zero. In other words, the parameter values violating the property lie on an algebraic subspace with zero measure in Lebesgue sense. Therefore, the probability of randomly lying on this subspace is mathematically zero~\cite{woude:03}. This random approach is irrespective of the distribution type and holds for any random selection.} free entries (the entries which are not fixed zeros of the system matrix) in the system matrices as long as the structure is not violated. The parameter values for which the generic results do not hold lie on an algebraic variety of measure zero, see, e.g.,~\cite{woude:03} for details.
In this direction, the networks in this paper may represent  LSI dynamical systems, where the system matrix is fixed-structure with possibly time-varying  entries. In other words, the system dynamics is in the form~$\dot{x} = Ax$ where the structure of~$A$ is fixed while its entries may potentially change in time. The only concern is that the time-varying entries do not lie on a certain zero-measure algebraic subspace. It should be mentioned that in this paper we consider the constituent networks to be directed with general non-symmetric adjacency matrices, while controllability/observability of undirected networks is considered in~\cite{mousavi2017structural,menara2018structural}.

\textit{Assumptions:} In this paper, we assume the structure of the directed networks~$\mc{G}_1$ and~$\mc{G}_2$ to be known and time-invariant. Further, the entries of their adjacency matrices~${A}_1$ and~${A}_2$ (or the link weights in the networks) are assumed as free variables.

\subsection{Some Useful Lemmas and Definitions}
\begin{defn}
	Define the S-rank or structural rank as the maximum numerical rank of the network adjacency matrix over all possible non-zero entries (or all possible values for the link weights in the network)~\cite{harary,murota}. 
\end{defn}
In fact, it is known that this is the maximum rank that can be attained over almost all choices of free variables~\cite{woude:03}. From this point onwards in the paper, for simplicity, we refer the network as being full S-rank or S-rank-deficient.

\begin{lem} \label{lem_srank}
	A network is full S-rank if and only if it contains a disjoint family of cycles spanning all nodes.
\end{lem}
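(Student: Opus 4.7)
The plan is to exploit the Leibniz expansion
$$\det(A) = \sum_{\sigma \in S_n} \mathrm{sgn}(\sigma)\prod_{i=1}^n A_{i,\sigma(i)},$$
together with the observation that each permutation $\sigma \in S_n$ decomposes into disjoint cycles on $\{1,\ldots,n\}$. The edge set $\{(i,\sigma(i)) : 1\le i\le n\}$ is therefore a disjoint union of directed cycles of $\mc{G}$ (with fixed points corresponding to self-loops) that together span every node. Existence of a spanning disjoint cycle family is thus equivalent to existence of a permutation $\sigma$ whose associated monomial $\prod_i A_{i,\sigma(i)}$ is supported entirely on free (structurally non-zero) entries of $A$.

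For the ``if'' direction, given such a spanning disjoint family I would read off the corresponding permutation $\sigma^\star$; the associated term $\mathrm{sgn}(\sigma^\star)\prod_i A_{i,\sigma^\star(i)}$ is a non-trivial monomial in the free entries of $A$, and since distinct permutations produce distinct monomials, it cannot be cancelled by any other term in the Leibniz sum. Hence $\det(A)$ is a non-zero polynomial in the free variables, and the standard genericity argument (a non-zero polynomial vanishes only on an algebraic variety of Lebesgue measure zero, cf.\ \cite{woude:03}) yields $\det(A)\neq 0$ for almost all parameter choices, so the S-rank equals $n$. For the ``only if'' direction I would argue contrapositively: if no spanning disjoint cycle family exists, then every permutation $\sigma$ has at least one factor $A_{i,\sigma(i)}$ equal to a structural zero, so every Leibniz term vanishes identically, $\det(A)\equiv 0$, and the S-rank is strictly less than $n$.

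The only subtle point is justifying that a single structurally non-zero monomial in the Leibniz expansion cannot be annihilated by the remaining terms. This relies on the assumption, stated explicitly in the paper, that the non-zero entries of $A$ are independent free variables: distinct permutations index distinct monomials in these variables, so the Leibniz sum, viewed as a polynomial in them, is non-zero whenever it contains at least one non-zero monomial. This is the standard bridge between the combinatorial criterion and the numerical notion of rank, and it is what makes the characterization in the lemma generic rather than merely existential.
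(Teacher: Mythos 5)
Your proof is correct: the Leibniz-expansion argument, the permutation--disjoint-cycle correspondence, the no-cancellation observation for independent free entries, and the genericity step are exactly the classical reasoning behind this lemma. The paper itself does not spell the proof out but simply cites the definition of a spanning cycle family in~\cite{rein_book} and Theorem~1 of~\cite{harary}, and your argument is precisely the standard proof contained in those references, so it fills in the same approach rather than taking a different route.
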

\begin{proof}
	The proof directly follows the definition of spanning cycle family~\cite{rein_book} and the proof of Theorem 1 in~\cite{harary}.
\end{proof}

\begin{lem}\label{lem_srankproduct}
	The following holds for S-rank of Kronecker product of adjacency matrices of two networks:
	\begin{eqnarray} \label{eq_kronrank}
	\mbox{S-rank}(A_1 \otimes A_2) \geq \mbox{S-rank}(A_1)\cdot \mbox{S-rank}(A_2)
	\end{eqnarray}
where the equality holds for \textit{almost all} choices of  matrices~$A_1$ and~$A_2$.
\end{lem}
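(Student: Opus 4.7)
The plan is to invoke the classical Kronecker rank identity $\mbox{rank}(A_1 \otimes A_2) = \mbox{rank}(A_1)\cdot \mbox{rank}(A_2)$, valid for any two numerical matrices, and combine it with the definition of S-rank as the maximum numerical rank attainable by a structured matrix over admissible assignments of its free entries. This cleanly splits the lemma into a lower-bound half handled by exhibiting a single good realization, and an ``almost all'' half handled by a standard Zariski-density argument.

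First I would pick specific numerical realizations $A_1^\star$ and $A_2^\star$ that respect the zero patterns $\mc{A}_1$ and $\mc{A}_2$ and attain $\mbox{rank}(A_i^\star) = \mbox{S-rank}(A_i)$; such realizations exist by the very definition of S-rank. Their Kronecker product $A_1^\star \otimes A_2^\star$ then respects the composite pattern $\mc{A}_1 \otimes \mc{A}_2$ and, by the Kronecker rank identity, has numerical rank exactly $\mbox{S-rank}(A_1)\cdot \mbox{S-rank}(A_2)$. Since $\mbox{S-rank}(A_1 \otimes A_2)$ is by definition the maximum numerical rank over realizations compatible with the pattern $\mc{A}_1 \otimes \mc{A}_2$, it must be at least this value, yielding the stated inequality.

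For the equality-almost-everywhere clause, I would argue that the set of $A_1$ failing to attain $\mbox{S-rank}(A_1)$ is contained in the vanishing locus of a nontrivial polynomial in the free entries of $A_1$ (any size-$\mbox{S-rank}(A_1)$ minor that is not identically zero, whose existence is guaranteed by the definition of S-rank); this locus is a proper algebraic subvariety of the admissible parameter space and hence has Lebesgue measure zero. The analogous statement holds for $A_2$. Off this measure-zero exceptional set both factors simultaneously attain their S-ranks, so the Kronecker rank identity gives $\mbox{rank}(A_1 \otimes A_2) = \mbox{S-rank}(A_1)\cdot \mbox{S-rank}(A_2)$; coupled with the previously established lower bound, this forces equality in~\eqref{eq_kronrank} for almost all admissible $(A_1, A_2)$.

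The main obstacle is conceptual rather than computational: one must identify the correct ambient space for the ``almost all'' assertion, namely the Lebesgue-measurable slice of $\mbb{R}^{N \times N} \times \mbb{R}^{n \times n}$ determined by the fixed zero patterns $\mc{A}_1$ and $\mc{A}_2$, and confirm that the rank-deficient locus within this slice is a \emph{proper} algebraic subvariety (the edge case being a pattern whose every compatible minor of size $\mbox{S-rank}(A_i)$ vanishes identically, which is ruled out precisely by the definition of S-rank). Beyond this definitional bookkeeping, no machinery beyond the Kronecker rank identity and elementary measure-zero considerations is needed.
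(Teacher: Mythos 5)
Your proposal is correct and follows essentially the same route as the paper: both proofs rest on the exact numerical identity $\mbox{rank}(A_1 \otimes A_2) = \mbox{rank}(A_1)\cdot\mbox{rank}(A_2)$ combined with a genericity/measure-zero argument, with your version being somewhat more careful in exhibiting an explicit witness realization $A_1^\star \otimes A_2^\star$ for the lower bound and in identifying the rank-deficient locus as a proper algebraic subvariety. The one caveat, shared with (and even more pronounced in) the paper's own proof, is the equality clause: your final step equates the rank of the \emph{realized} product $A_1 \otimes A_2$ with $\mbox{S-rank}(A_1 \otimes A_2)$, which is only legitimate if the latter is read as the maximum rank over Kronecker-constrained realizations rather than over all matrices with the pattern $\mc{A}_1 \otimes \mc{A}_2$ and independent free entries, since under the latter reading the composite pattern's structural rank can strictly exceed $\mbox{S-rank}(A_1)\cdot\mbox{S-rank}(A_2)$ and the ``almost all'' equality need not hold.
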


\begin{proof}
	The above lemma directly results from the definition of S-rank as the maximum possible rank of the adjacency matrix. Note that the S-rank is a generic property~\cite{woude:03}. Following the fact that~$	\mbox{rank}(A_1 \otimes A_2) = \mbox{rank}(A_1)\cdot \mbox{rank}(A_2)$~\cite{schacke2004kronecker} and the genericity of the Kronecker product, one can conclude that the equality in~\eqref{eq_kronrank} holds for almost all choices of matrices~$A_1$ and~$A_2$. Further, the choices of entries of~$A_1$ and~$A_2$ for which~$\mbox{rank}(A_1 \otimes A_2) > \mbox{rank}(A_1)\cdot \mbox{rank}(A_2)$ lie an algebraic subspace of zero Lebesgue measure in~$Nn$-dimensional space.
\end{proof}

The following is a well-known lemma on (structural) observability/controllability of networks and is originally taken from~\cite{rein_book}.
\begin{lem} \label{lem_cent}
	A network is (structurally) observable/controllable if and only if the following conditions hold:
	
	(i) \textbf{output/input-connectivity}: every node in the network is a begin/end-node of a connected path to/from an output/input.
	
	(ii) \textbf{S-rank recovery}: there exist a disjoint family of cycles and output/input-connected paths spanning all nodes in the network.
\end{lem}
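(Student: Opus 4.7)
The plan is to prove the two directions separately, exploiting the well-known duality between observability and controllability so that only one of them has to be argued in detail. Concretely, I would work with observability of a pair $(A,H)$ and obtain the controllability statement by applying the result to the transposed system $(A^\top,H^\top)$. The underlying algebraic tool is the generic PBH criterion: the structured pair $(\mc{A},\mc{H})$ is structurally observable if and only if the stacked matrix $M(\lambda) := [\,A-\lambda I\,;\,H\,]$ has full column rank for every $\lambda\in\mathbb{C}$ and for almost all numerical realizations of the free entries.

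For necessity, I would argue each condition directly from PBH. If some node $\mc{V}_j$ has no directed path to any output, then the state coordinate at $\mc{V}_j$ lies in a structurally unobservable $A$-invariant subspace, and one can exhibit a $\lambda$ at which the corresponding column of $M(\lambda)$ is forced to be zero; this forces output-connectivity. Evaluating $M(\lambda)$ at $\lambda=0$ yields $[A;H]$, whose generic column rank equals the size of a maximum matching in the bipartite graph associated with the state and output nodes. A spanning collection of disjoint cycles together with output-connected paths is precisely the combinatorial certificate for such a perfect matching, so condition (ii) is necessary as well; Lemma~\ref{lem_srank} handles the cycle part and the output-connected paths cover the remaining unmatched nodes.

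For sufficiency, the spanning cycle-and-path family supplies a generic perfect matching of $[A;H]$, and hence full structural column rank of $M(0)$. I would then extend this to every $\lambda\neq 0$ by observing that $A-\lambda I$ effectively installs a self-loop of weight $-\lambda$ at every state node, which only enlarges the family of admissible matchings; combined with output-connectivity, this rules out rank deficiency away from $\lambda=0$. A standard genericity argument then confines the failure set of the PBH test to a proper algebraic variety of zero Lebesgue measure in the space of free parameters, in the same spirit as the ``almost all'' arguments already used in the paper.

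The main obstacle is the bookkeeping in the $\lambda\neq 0$ step: one has to verify that output-connectivity truly prevents any $A$-invariant subspace from being structurally hidden from $\mc{H}$ at \emph{every} eigenvalue rather than just generically, and to link this back to the cycle/path decomposition without tacitly invoking extra connectivity hypotheses. The remaining pieces are routine applications of the matching-based characterization of structural rank and of the genericity machinery invoked in Lemmas~\ref{lem_srank} and~\ref{lem_srankproduct}.
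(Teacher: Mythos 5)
You should first note that the paper does not actually prove this lemma at all: its ``proof'' is a pointer to Reinschke's book, where the controllability version is established, together with the remark that observability follows by duality. So your attempt is, by construction, a different (self-contained, PBH-based) route. That route is the standard algebraic one and could in principle work, but as written it has a genuine gap exactly at its load-bearing step. For sufficiency you argue that $A-\lambda I$ ``installs a self-loop at every state node,'' so that the stacked matrix $\left[\begin{smallmatrix}A-\lambda I\\ H\end{smallmatrix}\right]$ has full structural rank for each fixed $\lambda\neq 0$, and that genericity then confines PBH failures to a null set. That inference is not valid: full structural rank of the pencil at each \emph{fixed} $\lambda$ does not exclude the possibility that for every (or almost every) realization there is some realization-dependent $\lambda\neq 0$ at which the rank drops --- the eigenvalues move with the parameters, so the failure set is a union over $\lambda$ and need not be contained in a proper variety just because each slice is. What must be shown is that the pair has no \emph{structurally fixed} modes away from zero, and the classical proofs do this combinatorially: rank of $\bigl[\,A-\lambda I\,;\,H\,\bigr]$ can be deficient at some $\lambda\neq 0$ for all parameter values only if the zero pattern can be permuted into a block-triangular form in which a nontrivial subsystem is decoupled from every output (the Glover--Silverman/Shields--Pearson--type decomposition, or equivalently the cactus/spanning cycle-and-path covering argument used in~\cite{rein_book}); output-connectivity is then invoked to preclude exactly that decomposition. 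Your sketch names this obstacle but does not resolve it, and resolving it is essentially the whole content of the lemma.

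A smaller but real inaccuracy is in your necessity argument: if some node has no path to an output, no single column of $M(\lambda)$ is ``forced to be zero.'' Rather, the non-output-connected nodes induce, after a permutation, a block-triangular structure in which the corresponding diagonal block sees zero columns of $\mc{H}$, and the PBH matrix loses rank at the eigenvalues of that block --- a collection of columns becomes dependent at realization-dependent values of $\lambda$, which is the correct (and standard) way to phrase it. The matching-based identification of condition (ii) with $\mbox{S-rank}\left(\begin{smallmatrix}\mc{A}\\ \mc{H}\end{smallmatrix}\right)=n$ at $\lambda=0$ is fine and matches Lemmas~\ref{lem_srank} and~\ref{lem_srank2}. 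In short: right framework, correct treatment of the $\lambda=0$ rank condition, but the $\lambda\neq 0$ sufficiency step --- the heart of the theorem --- is asserted rather than proved, and the necessity step needs the block-decoupling argument rather than a vanishing column.
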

\begin{proof}
	The detailed proof for structural controllability is given in~\cite{rein_book}, and one can easily extend the proof to dual notion of structural observability.
\end{proof}
The following lemma gives a restatement of condition (ii) in Lemma~\ref{lem_cent}.
\begin{lem} \label{lem_srank2}
	The condition (ii) in Lemma~\ref{lem_cent} recovers the S-rank of the network. For example in case of observability, if ~$\mbox{S-rank}(\mc{A}_2)<n$ the condition (ii) on the (structured) observation matrix~$\mc{H}$  is equivalent to the following:
	\begin{equation}
	\mbox{S-rank}\left(
	\begin{array}{c}
	\mc{A}_2\\
	\mc{H}
	\end{array}
	\right) = n.
	\end{equation}
\end{lem}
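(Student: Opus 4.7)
The plan is to translate both sides of the claimed equivalence into a common combinatorial object, namely a maximum bipartite matching associated with the stacked structured matrix. Recall that for any structured matrix $M$, the S-rank equals the size of a maximum matching in its bipartite representation (rows on one side, columns on the other, with an edge whenever the entry is a free variable), by the Edmonds–Konig characterization of generic rank. So $\mbox{S-rank}\bigl([\mc{A}_2^\top~\mc{H}^\top]^\top\bigr)=n$ is equivalent to the existence of a matching that saturates every one of the $n$ columns, using rows drawn from either the $\mc{A}_2$-block or the $\mc{H}$-block.

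First I would interpret such a saturating matching graph-theoretically. Assigning column $j$ to a row $i$ of $\mc{A}_2$ means selecting the edge $(\mc{V}_i,\mc{V}_j)\in\mc{E}$, while assigning column $j$ to a row of $\mc{H}$ means designating $\mc{V}_j$ as the terminal node of an output-connected path. Because every column is saturated exactly once, every state node receives exactly one incoming selected predecessor (either a state or an output). The selected subgraph is therefore a functional graph on the states whose weakly connected components are precisely (a) disjoint directed cycles, and (b) disjoint directed paths terminating at nodes assigned to $\mc{H}$-rows, which are exactly the output-connected paths. Saturation of all $n$ columns says these cycles and paths together span every state, which is precisely condition (ii) of Lemma~\ref{lem_cent} restricted to the S-rank–recovering part.

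Conversely, given a disjoint family of cycles and output-connected paths covering every state, I would construct a size-$n$ matching by taking, for each state $\mc{V}_j$, the unique edge of the cycle or path that enters $\mc{V}_j$; states that are the first node of an output-connected path are instead matched to the $\mc{H}$-row representing the associated output. No two states share the same selected predecessor edge (because the cycles and paths are disjoint at their nodes and each node has at most one in-edge within its own cycle/path), and the output assignments can always be made distinct because different path-starts correspond to different states and hence different free entries in $\mc{H}$. Hence the matching saturates all $n$ columns, establishing $\mbox{S-rank}\bigl([\mc{A}_2^\top~\mc{H}^\top]^\top\bigr)=n$.

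The main obstacle I anticipate is purely bookkeeping: keeping the orientation convention of the paper (here $A_{ij}$ is the weight of the edge $(\mc{V}_i,\mc{V}_j)$, so column $j$ of $\mc{A}_2$ lists incoming edges of $\mc{V}_j$) consistent between the matrix side and the graph side, and verifying that the $\mc{H}$-rows provide enough distinct matchings to cover the shortfall $n-\mbox{S-rank}(\mc{A}_2)>0$ without double counting. Once the bipartite-matching viewpoint is fixed, both implications are immediate and no additional machinery beyond Lemma~\ref{lem_srank} and the standard generic-rank/matching correspondence is required.
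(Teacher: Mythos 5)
Your approach is genuinely different from the paper's, because the paper does not actually prove this lemma: it simply points to Davison's connectability result in~\cite{davison1977connectability}. You instead give a self-contained argument via the generic-rank/maximum-matching correspondence and the decomposition of a column-saturating matching of $\left(\begin{smallmatrix}\mc{A}_2\\ \mc{H}\end{smallmatrix}\right)$ into node-disjoint cycles and paths carrying exactly one measured node each. This is the standard structured-systems proof (essentially how Reinschke and Murota obtain the cycle-family characterization of generic rank), and it buys something the citation does not: an explicit, constructive bijection between saturating matchings and spanning cycle/path families, which is also what Theorem~\ref{thm_unmatched} implicitly relies on.

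Two points need tightening before the argument is airtight. First, your two directions disagree about which end of a path carries the measurement: under the orientation you fixed (column $j$ matched to row $i$ of $\mc{A}_2$ selects the in-edge $(\mc{V}_i,\mc{V}_j)$ of $\mc{V}_j$), the $\mc{H}$-matched node is the in-degree-zero node of its component, i.e., the \emph{start} of the path in the paper's edge direction, not the terminus as stated in your forward direction; your converse paragraph ("first node \dots matched to the $\mc{H}$-row") uses the opposite convention. Either reading can be reconciled with the paper's phrase "output-connected path" (the paper's own edge/influence conventions are loose, since observability and controllability are handled by transposition), but both halves of your proof must commit to the same one. Second, in the converse, "different free entries in $\mc{H}$" is not sufficient: a matching needs distinct \emph{rows} as well as distinct columns, and two path endpoints measured only by the same single output row cannot both be matched. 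This is repaired by reading condition (ii) in the standard way, namely that each output-connected path includes its terminal output vertex and the disjointness of the family extends to those output vertices, so distinct paths necessarily use distinct rows of $\mc{H}$. With these two fixes the equivalence follows exactly along the lines you outline, with no machinery beyond Lemma~\ref{lem_srank} and the matching characterization of S-rank.
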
	
\begin{proof}
	The proof can be found in~\cite{davison1977connectability}.
\end{proof}	
In the line of above results on structured systems theory a very related concept is \textit{self-damped} network (system) defined as follows:

\begin{defn}
	Define the \textit{self-damped} network as a network having self-links (or self-cycles) at every node~\cite{acc13_mesbahi}.
\end{defn}

%\begin{lem} \label{lem_srankproduct}
%	For two matrices~$A_1$ and~$A_2$ and their Kronecker products~$(A_1 \otimes A_2)$ the following holds:
%\begin{eqnarray}
%\mbox{S-rank}(A_1 \otimes A_2) = \mbox{S-rank}(A_1) \times \mbox{S-rank}(A_2)
%\end{eqnarray}
%\end{lem}
%\begin{proof}
%	Based on the definition of S-rank, for two matrices~$A_1$ and~$A_2$ for all admissible choices of nonzero-entries we have,
%\begin{eqnarray}	
%	 \max(\mbox{rank}(A_1 \otimes A_2)) = \max(\mbox{rank}(A_1) \times \mbox{rank}(A_2)) \\
%	 = \max(\mbox{rank}(A_1)) \times \max(\mbox{rank}(A_2))
%\end{eqnarray}	
%\end{proof}
	
\section{ Minimal Conditions for~$(\mc{A}_2,\mc{H})$-Observability/Controllability } \label{sec_cent}
To solve the Problem~\ref{prob}, we first need to find the minimal outputs/inputs for observability/controllability of the replica network~$\mc{G}_2$. Mathematically, we solve the following problem in this section,
\begin{prob} \label{probA}
\begin{equation} \label{eq_probA}
\begin{aligned}
\displaystyle
\argmin
\limits_{\mc{H}} ~~ & |\mc{H}|_0 \\
\text{s.t.} ~~ & ( \mc{A}_2,\mc{H})\mbox{-observability/controllability.}
\end{aligned}
\end{equation}
\end{prob}
\noindent To solve this problem, we borrow some definitions and concepts from our previous works~\cite{jstsp,asilomar11}. Similar definitions are introduced in~\cite{pequito2015framework}. We define specific components in the network, which are involved in the observability/controllability of networks as follows.
\begin{defn}
	A Strongly Connected Component (SCC) is defined as a component in which every node is connected to every other node via a directed path. Define a Strongly-Connected (SC) network as a network in which there is a directed path from every node to every other node, i.e., the entire network makes one SCC. In a non-SC network, define a \textit{parent} SCC as a SCC having no outgoing links to nodes in any other SCC. Similarly, a \textit{child} SCC is defined as a SCC with no incoming links from nodes in other SCCs. Further,~$\mc{S}^p=\{\mc{S}^p_1,\mc{S}^p_2,\dots\}$ and~$\mc{S}^c=\{\mc{S}^c_1,\mc{S}^c_2,\dots\}$ respectively represent the set of parent SCCs and  child SCCs in the network, and the partial order\footnote{Among SCCs, the partial order is defined as having incoming links (or a sequence of links) from other SCCs, i.e.,~$\mc{S}_a \prec \mc{S}_b$ implies that there is (at least) a link/path from a node in~$\mc{S}_a$ to a node in~$\mc{S}_b$.} is defined as~$\mc{S}^c_i \prec \mc{S}^p_j$.
\end{defn}
The SCCs and their partial order are specifically related to the output/input-connectivity condition in Lemma~\ref{lem_cent} as stated in the following.
\begin{thm} \label{thm_scc}
	Let~$\mbox{S-rank}(\mc{A}_2)=n$. For observability/controll-ability of~$\mc{G}_2$, it is necessary and sufficient to measure/control one node in every parent/child SCC, i.e., for minimum observability~$|\mc{H}|_0=|\mc{S}^p|$ and for minimum controllability~$|\mc{H}|_0=|\mc{S}^c|$, where~$|\cdot|$ represents the cardinality of the set.
\end{thm}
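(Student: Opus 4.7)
The plan is to invoke Lemma~\ref{lem_cent} to reduce the theorem to a purely connectivity-based statement. Since $\mbox{S-rank}(\mc{A}_2)=n$, Lemma~\ref{lem_srank} guarantees a disjoint family of cycles spanning all nodes, so condition~(ii) of Lemma~\ref{lem_cent} is automatically satisfied independent of any choice of $\mc{H}$. Consequently, the problem collapses to finding the minimum number of outputs (for observability) or inputs (for controllability) that realize condition~(i), i.e., the output/input-connectivity requirement. In particular, Lemma~\ref{lem_srank2} becomes vacuous here because $\mc{A}_2$ already has full S-rank, so no rows of $\mc{H}$ are needed to lift it.

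For the observability part, I would first argue necessity. By the definition of a parent SCC, every directed path starting at a node $v \in \mc{S}^p_i$ must stay inside $\mc{S}^p_i$, since $\mc{S}^p_i$ has no outgoing links to any other SCC. Hence any output to which $v$ is connected must lie within $\mc{S}^p_i$ itself; because the parent SCCs are pairwise disjoint, this already forces $|\mc{H}|_0 \geq |\mc{S}^p|$. For sufficiency I would attach one output to an arbitrarily chosen node of each parent SCC and show that every node of $\mc{G}_2$ then admits a directed path to some such output. This uses the condensation DAG of $\mc{G}_2$: starting from any node, traverse outgoing links, moving between SCCs according to the partial order $\prec$, until a sink of the condensation is reached, which by definition is a parent SCC. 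The strong connectivity inside that parent SCC then supplies a path to the designated output, establishing condition~(i) with exactly $|\mc{S}^p|$ measurements.

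The controllability claim follows by a direct structural dual: reversing all edge directions swaps parent and child SCCs and swaps the notions of input- and output-connectivity, so applying the argument above to the reversed digraph immediately yields $|\mc{H}|_0 = |\mc{S}^c|$.

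The main subtlety, though not a deep obstacle, is to rigorously justify the condensation-walk step: one must argue that greedily following outgoing links from a non-parent SCC actually terminates in a parent SCC. This is true because the condensation is a finite directed acyclic graph whose maximal elements under $\prec$ are precisely the parent SCCs, so any such traversal must terminate at one of them; a minor bookkeeping issue is the degenerate case where $\mc{G}_2$ itself is strongly connected, in which the unique SCC is simultaneously parent and child and a single measurement/input suffices, consistent with the stated formulas.
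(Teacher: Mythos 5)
Your proposal is correct and follows essentially the same route as the paper's proof: use Lemma~\ref{lem_cent}, observe that full S-rank disposes of condition~(ii), prove sufficiency by routing every non-parent SCC along the partial order to an output-connected parent SCC, and prove necessity from the fact that a parent SCC without an output has no path to any measurement. Your treatment is in fact slightly more explicit than the paper's on two points---the disjointness of parent SCCs giving the lower bound $|\mc{H}|_0 \geq |\mc{S}^p|$, and the termination of the condensation-DAG walk at a sink---but the underlying argument is the same.
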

\begin{proof}
	The proof follows from the conditions in Lemma~\ref{lem_cent}. The network being full S-rank implies that the condition (ii) in Lemma~\ref{lem_cent} is satisfied. For condition (i) every node must satisfy output/input-connectivity. We state the proof for observability and it can be easily extended to controllability.
	
	\textbf{Sufficiency}: Consider one node measurement  in every~parent~SCC. Following the definition of SCC, all nodes in the same parent SCC have a connected path to the outputs. Further, based on the definition, for every non-parent SCC for example a child SCC~$\mc{S}^c_i$, there exist a parent SCC~$\mc{S}^p_j$ for which~$\mc{S}^c_i \prec \mc{S}^p_j$; this implies that there is a path from all nodes in~$\mc{S}^c_i$ to nodes in~$\mc{S}^p_j$, which are output-connected and output-connectivity of nodes in~$\mc{S}^c_i$ follows. This holds for every non-parent SCC, as it has a path to (at least) one parent SCC and thus is connected to (at least) one output.
	
	\textbf{Necessity}: Assume that there is only one parent SCC~$\mc{S}^p_j$ with no output. Following the definition of a parent SCC, there is no path from nodes in~$\mc{S}^p_j$ to any other measured SCC or any direct measurement, violating condition (i) in Lemma~\ref{lem_cent}.
	
	Since one measurement from one node in every parent SCC is necessary and sufficient, the minimum number of non-zero entries in~$\mc{H}$ is equal to number of parent SCCs~$|\mc{S}^p|$ in the network; and the theorem follows.
\end{proof}
A very important concept is irreducibility of the adjacency matrix defined as follows,

\begin{defn}
    An \textit{irreducible} matrix is such that it cannot be transformed into block upper-triangular or block lower-triangular by simultaneous row/column permutations. A matrix that is not irreducible is \textit{reducible}.
\end{defn}
The concept of strong-connectivity and irreducibility are related according to the following:
\begin{lem} \label{lem_SC}
	For a SC network~$\mc{G}$, its adjacency matrix~$A$ is irreducible.
\end{lem}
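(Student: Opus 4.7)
Proof plan: The natural approach is to argue by contrapositive, showing that if $A$ is reducible then $\mc{G}$ cannot be strongly connected, which contradicts the hypothesis. By the definition of reducibility just given, we obtain a permutation matrix $P$ such that
\[
P A P^{\top} = \begin{pmatrix} B_{11} & B_{12} \\ 0 & B_{22} \end{pmatrix},
\]
where $B_{11}$ and $B_{22}$ are square blocks of positive size. Since $P$ amounts to a simultaneous relabeling of the rows and columns, i.e., of the nodes of $\mc{G}$, it induces a partition of the vertex set $\mc{V} = \mc{V}_1 \cup \mc{V}_2$ into two nonempty disjoint subsets, one for each block index.

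The crucial step is translating the zero off-diagonal block back into graph terms. Using the paper's convention that $A_{ij}$ is the weight of the link $(\mc{V}_i,\mc{V}_j)$, the vanishing block forces $A_{ij}=0$ for every $i \in \mc{V}_2$ and every $j \in \mc{V}_1$; in other words, there is no direct link from any node in $\mc{V}_2$ to any node in $\mc{V}_1$. Iterating the same observation over composed walks, any directed walk that begins in $\mc{V}_2$ must remain in $\mc{V}_2$ at every step, so no node of $\mc{V}_2$ can be connected by a directed path to any node of $\mc{V}_1$. Pick any pair $(u,v)$ with $u \in \mc{V}_2$, $v \in \mc{V}_1$: the absence of such a path contradicts the SC assumption, completing the contrapositive.

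The main obstacle I foresee is not conceptual but notational: one must be careful to align the paper's adjacency-matrix convention (row index $=$ source, column index $=$ target) with the orientation of the zero block produced by the chosen permutation, so that the derived absence of arcs points the correct way; the block-lower-triangular case is handled by interchanging the roles of $\mc{V}_1$ and $\mc{V}_2$. Once this bookkeeping is settled, the argument is immediate and requires neither induction nor any auxiliary spectral machinery, and the standard converse (irreducibility implying SC) would follow symmetrically by reversing the implication.
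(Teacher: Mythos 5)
Your argument is correct. Note, though, that the paper does not actually write out a proof of this lemma: it simply cites Reinschke's book~\cite{rein_book}, so your contrapositive argument serves as a self-contained elementary replacement for that citation rather than a variant of an in-paper proof. The reasoning itself is the standard one and is sound: reducibility yields a simultaneous row/column permutation producing a block-triangular form, hence a nonempty bipartition $\mc{V}=\mc{V}_1\cup\mc{V}_2$ with no arc from $\mc{V}_2$ to $\mc{V}_1$ (under the paper's row-as-source, column-as-target convention), so every directed walk starting in $\mc{V}_2$ stays in $\mc{V}_2$ and no node of $\mc{V}_2$ reaches $\mc{V}_1$, contradicting strong connectivity; the block-lower-triangular case is symmetric, exactly as you say. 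Two minor remarks: the presence of self-loops (relevant later for self-damped networks) plays no role here, since they only populate the diagonal blocks, and your closing aside that the converse ``follows symmetrically by reversing the implication'' is a bit glib --- the converse (irreducible implies SC) needs its own short argument, e.g.\ considering the set of nodes reachable from a fixed node --- but that direction is not claimed by the lemma, so it does not affect the validity of your proof.
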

\begin{proof}
	The proof is given in~\cite{rein_book}.
\end{proof}
To better understand the structure of an irreducible matrix  we provide an example. For a typical~$n \times n$ irreducible matrix, there exists a sequence of~$n$ non-diagonal nonzero entries that share no row and no column~\cite{rein_book}. In a graph-theoretic perspective, this sequence represents a path that spans all nodes in the associated graph. For example an irreducible matrix with~$4$ components may have the following structure:
	\begin{equation}  \nonumber
	\left(
	\begin{array}{c|c|c|c}
	  & & & A_{14}\\
	\hline
	A_{21} &  & &  \\ \hline
	 & A_{32} & &  \\ \hline
	&  & A_{43} & 	
	\end{array} \right)
	\end{equation}

\begin{lem} \label{lem_self}
	For a self-damped network~$\mc{G}$ with adjacency matrix~$A$, the output/input connectivity condition for observability/controllability recovery is that every parent/child SCC is measured/controlled.
\end{lem}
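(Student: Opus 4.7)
The plan is to observe that the self-damped hypothesis immediately upgrades the network to full S-rank, after which the desired conclusion essentially reduces to Theorem~\ref{thm_scc}. The key step is to identify the spanning disjoint cycle family that is hiding in the self-loops, so that the S-rank recovery condition~(ii) of Lemma~\ref{lem_cent} is discharged for free, leaving only the output/input-connectivity condition~(i) to be analyzed.

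Concretely, I would first note that, by the definition of a self-damped network, every node carries a self-loop, and each self-loop is a length-one cycle that involves no other node. Therefore the collection of all self-loops forms a disjoint cycle family that spans every node in~$\mc{G}$. Invoking Lemma~\ref{lem_srank}, this yields~$\text{S-rank}(A)=n$, so~$\mc{G}$ is a full S-rank network. Equivalently, condition~(ii) of Lemma~\ref{lem_cent} holds without needing any output/input-connected paths in the spanning family, because the self-loops alone already cover all nodes.

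Having established full S-rank, I would appeal directly to Theorem~\ref{thm_scc}: for a full S-rank network, observability is equivalent to the output-connectivity of every node, and this in turn is equivalent to placing one measurement in each parent SCC, whereas controllability is equivalent to input-connectivity of every node and therefore to actuating one node in each child SCC. For the present lemma this gives both directions at once. In the sufficiency direction, if every parent (respectively child) SCC contains a measured (respectively controlled) node, then by the partial order~$\mc{S}^c_i \prec \mc{S}^p_j$ every non-parent (respectively non-child) SCC has a directed path to (respectively from) a measured (respectively controlled) SCC, so the connectivity condition~(i) of Lemma~\ref{lem_cent} holds. In the necessity direction, an unmeasured parent SCC has, by definition, no outgoing links and no direct measurement, so its nodes cannot be output-connected, violating~(i); the dual argument works for child SCCs.

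The only subtlety worth highlighting is that in a general full S-rank network Theorem~\ref{thm_scc} is phrased as a counting statement about~$|\mc{H}|_0$, whereas Lemma~\ref{lem_self} is a structural statement about which SCCs must contain an output/input. These are equivalent conclusions because the proof of Theorem~\ref{thm_scc} already isolates the output/input-connectivity condition on the SCC level, and the self-damped hypothesis removes the only obstacle (S-rank deficiency) that could otherwise force additional measurements outside the parent/child SCCs. I do not expect any genuine obstacle in the argument; the main care is to make explicit that self-loops discharge condition~(ii) of Lemma~\ref{lem_cent} so that the lemma reduces cleanly to the connectivity argument of Theorem~\ref{thm_scc}.
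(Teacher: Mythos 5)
Your proposal is correct and follows essentially the same route as the paper: observe that the self-loops form a disjoint spanning cycle family so Lemma~\ref{lem_srank} gives full S-rank, then reduce to the SCC-level connectivity argument of Theorem~\ref{thm_scc}. Your version simply makes the sufficiency/necessity steps of that reduction more explicit.
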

\begin{proof}
   The key point is that  the self-damped network~$\mc{G}$ contains a disjoint family of self-cycles spanning all nodes, and therefore from Lemma~\ref{lem_srank} its structured adjacency matrix~$\mc{A}$ is full S-rank. Then, following similar procedure as in proof of Theorem~\ref{thm_scc}, measuring (at least) one state in every parent SCC recovers output-connectivity for observability, and controlling (at least) one state in every child SCC recovers input-connectivity for controllability,
\end{proof}

The idea for Lemma~\ref{lem_self} is taken from the proof methodology in chapter 1 in~\cite{rein_book}. For a self-damped network~$\mc{G}$, all diagonal entries of adjacency matrix~$A$ are non-zero. A path from a node~$i$ to a node~$j$ in network~$\mc{G}$ represents a sequence of non-zero entries in its adjacency matrix~$A$ from diagonal entry~$A_{ii}$ to the diagonal entry~$A_{jj}$, as illustrated in Fig.~\ref{fig_matrix}.
\begin{figure}
	\centering
	{\includegraphics[width=3.5in]{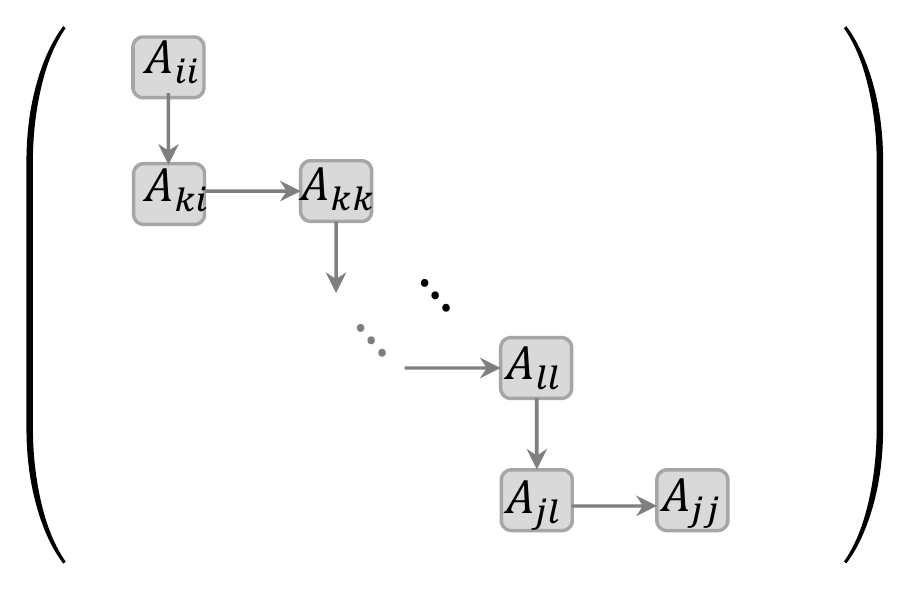}}
	\caption{ This figure shows the procedure to check the output/input-connectivity via the nonzero pattern of the network adjacency matrix. This matrix may represent a self-damped network with non-zero diagonal entries, or block triangularization of adjacency matrix of a non-SC network with irreducible block diagonals. A sequence of nonzero entries/blocks between~$\mc{A}_{ii}$ and~$\mc{A}_{jj}$ as shown in the figure represents a path between two nodes/SCCs~$i$ and~$j$ in the network. The input-connectivity of node/SCC~$i$ implies the input-connectivity of all nodes in the path from node/SCC~$i$ to~$j$, and, similarly, the output-connectivity of node/SCC~$j$ implies that all nodes in the path from node/SCC~$i$ to~$j$ are connected to the  output.}
	\label{fig_matrix}
\end{figure}
Note that the non-zero entries represent the weights of the links in the sequence of nodes from node~$i$ to node~$j$.
One may adopt this approach to easily check the output/input-connectivity of the structured adjacency matrix~$\mc{A}$. Note that, this fact can be generalized for irreducible block diagonals of the adjacency matrix. A non-SC network can be transformed into a block triangular form with irreducible block diagonals, each block representing a SCC, and lower diagonal blocks, defining the partial order of SCCs. For the illustrative example in Fig.~\ref{fig_matrix}, the diagonals~$A_{ii}$,~$A_{kk}$,~$A_{ll}$, and~$A_{jj}$ may represent an irreducible block (a SCC);~$A_{ki}$ and~$A_{jl}$ represent a block with at least one non-zero entry. For this example, the partial order of SCCs is defined as~$\mc{S}^c_i \prec \mc{S}_k \prec \mc{S}_l \prec \mc{S}^p_j$.

\begin{lem}
	All parent and child SCCs are disjoint and share no nodes.
\end{lem}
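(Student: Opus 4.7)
The plan is to reduce the claim to two separate observations: (a) the SCC decomposition partitions the vertex set, so any two distinct SCCs are automatically vertex-disjoint, and (b) no SCC can simultaneously qualify as both a parent SCC and a child SCC in a non-SC (and weakly connected) network.

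For (a) I would invoke the standard fact that mutual reachability is an equivalence relation on $\mc{V}$: two nodes $u,v$ lie in the same SCC iff there is a directed path from $u$ to $v$ and one from $v$ to $u$. The equivalence classes of this relation are precisely the SCCs, so they partition $\mc{V}$. Consequently, for any two SCCs $\mc{S}_a \neq \mc{S}_b$ we have $\mc{S}_a \cap \mc{S}_b = \emptyset$, which already delivers ``share no nodes'' whenever the parent and child SCC under consideration happen to be distinct.

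For (b) I would argue by contradiction. Suppose some SCC $\mc{S}$ were both a parent SCC and a child SCC. By the parent property $\mc{S}$ has no outgoing link to any other SCC, and by the child property $\mc{S}$ has no incoming link from any other SCC, so $\mc{S}$ is entirely disconnected from $\mc{V}\setminus\mc{S}$ in the condensation DAG. Because the definitions of parent and child SCC are stated for a non-SC network, we have $\mc{V}\setminus\mc{S}\neq\emptyset$, and under the working assumption (implicit throughout the paper) that $\mc{G}_2$ is weakly connected, every SCC must possess at least one incoming or outgoing cross-SCC edge, contradicting the isolation of $\mc{S}$. Therefore any parent SCC must differ from any child SCC, and combined with (a) this gives $\mc{S}^p_j \cap \mc{S}^c_i = \emptyset$ for all $i,j$.

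The main obstacle is precisely the degenerate case in (b): if one allowed a weakly disconnected $\mc{G}_2$, an isolated SCC would trivially be both parent and child, so one either needs the weak-connectivity hypothesis or one restricts the statement to SCCs that lie in the same weakly connected component. I would flag this caveat explicitly, and otherwise keep the proof short since the heart of the argument is just the partition property of the SCC decomposition together with the mutual exclusivity of ``no outgoing cross-SCC edge'' and ``no incoming cross-SCC edge'' in a connected condensation DAG.
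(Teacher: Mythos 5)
Your proof is correct, and its part (a) is exactly the paper's argument: the paper's entire proof is a citation of the standard fact that the SCCs produced by an SCC decomposition are pairwise disjoint (equivalence classes of mutual reachability partition $\mc{V}$), which is what you establish directly. Your part (b) goes beyond what the paper proves: the paper, as clarified by the remark immediately after the lemma (``this lemma holds generally for all SCCs irrespective of their partial order''), only asserts that \emph{distinct} SCCs share no nodes, and never addresses whether a single SCC could simultaneously be a parent and a child. Your contradiction argument settles that stronger reading, and you correctly identify the one hypothesis it needs --- weak connectivity of the network (equivalently, of the condensation DAG) --- without which an isolated SCC is trivially both parent and child. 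So the core of your proof coincides with the paper's, while your addition buys a sharper statement at the price of an explicit connectivity assumption that the paper neither states nor requires; flagging that caveat, as you do, is the right call.
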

\begin{proof}
	As proved in~\cite{algorithm} all SCCs obtained from a SCC decomposition (including the parent and child SCCs) are disjoint.
\end{proof}
This lemma holds generally for all SCCs irrespective of their partial order.
Note that, the input/output-connectivity of SCCs are related to condition (i) in Lemma~\ref{lem_cent}. For condition (ii), other than SCCs, graph notions and components regarding the S-rank deficiency of networks are involved in the observability/controllability.

\begin{defn} \label{def_unmatched}
	For a network, define a \textit{maximum matching}~$\mc{M}$ as the maximum size set of links that share no common nodes. A node is \textit{matched} for observability/controllability if it is end/begin node of a link in~$\mc{M}$; otherwise, it is \textit{unmatched}.
\end{defn}

It is known that the unmatched nodes are not unique, in general. The set of all possible unmatched nodes for observability/controllability is represented by new components described as follows.

\begin{defn} \label{def_contraction}
	Define a dilation as a subset of nodes~$\mc{F}$ such that~$|\mc{N}(\mc{F})|<|\mc{F}|$; where the neighborhood~$\mc{N}(\mc{F})$ is the set of all nodes~$\mc{V}_i$,  which are the begin nodes of links to nodes~$\mc{V}_j$ in~$\mc{F}$, i.e.,~$\mc{N}(\mc{F}) = \{\mc{V}_i|(\mc{V}_i,\mc{V}_j) \in \mc{E}, \mc{V}_j \in \mc{F}\}$. Similarly, define a contraction as a subset of nodes~$\mc{F}$ such~that~$|\mc{N}(\mc{F})|<|\mc{F}|$, where~$\mc{N}(\mc{F})=\{\{\mc{V}_j|(\mc{V}_i,\mc{V}_j) \in \mc{E}, \mc{V}_i \in \mc{F}\}$
\end{defn}

Roughly speaking, contractions are involved with the subgraph in which more number of nodes are linked/contracted towards less other nodes. Similarly, in a dilation less number of nodes are linked/dilated to more other nodes. It is known that all nodes in the same dilation/contraction are equivalent for controllability/observability, and therefore, one unmatched node from every contraction/dilation is necessary and sufficient for S-rank recovery~\cite{doostmohammadian2017recovery,doostmohammadian2017observational}. In this direction, define the set~$\mc{C}=\{\mc{C}_1,\mc{C}_2,\dots\}$ and~$\mc{D}=\{\mc{D}_1,\mc{D}_2,\dots\}$ as the set of all contractions and dilations, respectively.
In general, the sets~$\mc{D}$ and~$\mc{C}$ are unique and, further,  the contractions/dilations are not disjoint  and may share nodes~\cite{doostmohammadian2017observational}.

\begin{thm} \label{thm_unmatched}
	For observability/controllability of SC network~$\mc{G}_2$, it is necessary and sufficient to measure/control every unmatched node in the network contractions/dilations, i.e., for observability~$|\mc{H}|_0=|\mc{C}|$ and for controllability~$|\mc{H}|_0=|\mc{D}|$.
\end{thm}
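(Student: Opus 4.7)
The plan is to decompose the problem using Lemma~\ref{lem_cent}, treating the output/input-connectivity condition (i) and the S-rank recovery condition (ii) separately. Because $\mc{G}_2$ is strongly connected, the entire network forms a single SCC that is simultaneously the unique parent and child SCC, so any nonempty $\mc{H}$ trivially output-connects (resp.\ input-connects) every node; thus condition (i) imposes nothing beyond the presence of at least one measurement, and the whole burden falls on condition (ii), which by Lemma~\ref{lem_srank2} amounts to requiring that the stacked matrix $\bigl(\begin{smallmatrix}\mc{A}_2\\ \mc{H}\end{smallmatrix}\bigr)$ attain S-rank $n$. I would state the observability argument in detail; controllability then follows by transposition/duality, with the roles of dilations and contractions exchanged.

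For the S-rank recovery step, I would exploit the K\"onig-type identification of structural rank with a maximum bipartite matching $\mc{M}$ on $\mc{A}_2$. The deficiency $n - \mbox{S-rank}(\mc{A}_2)$ equals the number of unmatched end-nodes in any such $\mc{M}$, and by Definition~\ref{def_contraction} these unmatched nodes are precisely the witnesses of the contractions $\mc{C}_i$. Each nonzero row of $\mc{H}$ contributes an extra matching edge incident to a designated node, so for sufficiency I would argue that placing one measurement at an unmatched node in every $\mc{C}_i$ augments a suitably chosen $\mc{M}$ to a perfect matching on the stacked matrix, which by Lemma~\ref{lem_srank} yields the spanning disjoint family of cycles and output-connected paths demanded by condition (ii). For necessity, if some $\mc{C}_{i^{\star}}$ contains no measured node, then by the equivalence property of contractions (as established in~\cite{doostmohammadian2017observational,doostmohammadian2017recovery}) one can pick a maximum matching with an unmatched node in $\mc{C}_{i^{\star}}$ that no row of $\mc{H}$ covers, leaving the stacked matrix S-rank deficient.

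The step I expect to be most delicate is the counting, given the remark immediately preceding the theorem that contractions need not be disjoint. Naively one might hope that a single measurement at a shared node could serve two contractions, reducing the count below $|\mc{C}|$, but in any fixed maximum matching a node can be unmatched in only one role, so restoring the full bipartite deficiency still demands one distinct unmatched-node measurement per contraction. Making this rigorous---most likely via the Dulmage--Mendelsohn decomposition or the interchange lemmas invoked in~\cite{doostmohammadian2017observational}---is the main obstacle; the remainder reduces to clean combinatorial bookkeeping of matching edges against observation rows.
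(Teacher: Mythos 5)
Your proposal is correct and follows essentially the same route as the paper's own proof, which likewise reduces the problem to S-rank recovery (output/input-connectivity being immediate from strong connectivity), asserts that every maximum matching leaves one unmatched node in each contraction/dilation, and disposes of the shared-node counting by noting that a measurement at a shared node recovers only one unit of S-rank, deferring the details to the authors' earlier works---precisely the Dulmage--Mendelsohn-type facts you flag as the delicate step. The only difference is the level of detail: you spell out the K\"onig/Hall matching argument that the paper handles by citation.
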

\begin{proof}
	The proof is given in our previous work~\cite{jstsp14,doostmohammadian2017observational} for observability.
In fact, for any choice of maximum matching, there is one unmatched node in every dilation and similar statement holds for contractions. This directly follows the definition of contraction/dilation. Further, in case two dilations/contractions share a node, controlling/measuring the shared node recovers the S-rank by one~\cite{doostmohammadian2017observational,globalsip14}. Therefore, for full S-rank recovery one node in every dilation/contraction must be  controlled/measured. 	
\end{proof}
From Theorem~\ref{thm_unmatched} and Lemma~\ref{lem_srank2} one can conclude that having an observation/input at every unmatched node recovers the S-rank; in terms of observability:
\begin{eqnarray} \nonumber
	\mbox{S-rank}\left(
	\begin{array}{c}
	\mc{A}_2\\
	\mc{H}
	\end{array}
	\right) = n.
\end{eqnarray}
and in terms of controllability:
\begin{eqnarray} \nonumber
	\mbox{S-rank}\left(\mc{A}_2|\mc{H}	\right) = n.
\end{eqnarray}
Although in Theorem~\ref{thm_scc} the results are given for~$\mbox{S-rank}(\mc{A})=n$, one can easily restate  Theorem~\ref{thm_scc} based on the above equations. Similar statement holds for Theorem~\ref{thm_unmatched}. In other words, having the observations/inputs according to Theorem~\ref{thm_scc} for output/input-connectivity (instead of having SC network), the S-rank can be recovered according to Theorem~\ref{thm_unmatched}. Similarly,
having the observations/inputs according to Theorem~\ref{thm_unmatched} for S-rank recovery (instead of having full S-rank network), the input/output-connectivity can be recovered via Theorem~\ref{thm_scc}. However, some inputs/observations may recover both conditions. Combining Theorems\ref{thm_scc} and~\ref{thm_unmatched}, one can  deduce the following corollary.
\begin{cor} \label{cor_minH}
In general, contractions/dilations may share nodes with SCCs. Therefore, from Theorems~\ref{thm_scc} and~\ref{thm_unmatched}, the minimum number of non-zeros in~$\mc{H}$ for observability is,
\begin{eqnarray}
|\mc{H}|_0 = |\mc{S}^p|+|\mc{C}|-\min(|\mc{S}^p \cap \mc{C}|)
\end{eqnarray}
and for controllability,
\begin{eqnarray}
|\mc{H}|_0 = |\mc{S}^c|+|\mc{D}|-\min(|\mc{S}^c \cap \mc{D}|)
\end{eqnarray}
where~$|\mc{S}^p \cap \mc{C}|$ (respectively~$|\mc{S}^c \cap \mc{D}|$) represents the number of parent SCCs and contractions (respectively child SCCs and dilations) that share nodes.
\end{cor}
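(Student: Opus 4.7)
The plan is to read the corollary as a direct accounting consequence of combining Theorem~\ref{thm_scc} (output/input-connectivity) and Theorem~\ref{thm_unmatched} (S-rank recovery), treating each as a covering requirement and then subtracting the savings obtainable when a single measurement/input simultaneously fulfills both requirements. I would first recall Lemma~\ref{lem_cent}: a measurement set~$\mc{H}$ renders~$(\mc{A}_2,\mc{H})$ observable iff it realizes output-connectivity and S-rank recovery. Theorem~\ref{thm_scc} shows that output-connectivity costs exactly~$|\mc{S}^p|$ outputs, one per parent SCC, while Theorem~\ref{thm_unmatched} shows that S-rank recovery costs exactly~$|\mc{C}|$ outputs, one per contraction. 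Treated independently, these two requirements would therefore demand~$|\mc{S}^p|+|\mc{C}|$ measurements, which is the trivial upper bound to beat.

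Next I would make the key observation that a single output placed at a node that simultaneously lies in a parent SCC and in a contraction satisfies the connectivity requirement for that SCC and the S-rank-recovery requirement for that contraction at once. The corollary then reduces to a bipartite matching/assignment: build a bipartite graph with parent SCCs on one side, contractions on the other, and an edge between~$\mc{S}^p_i$ and~$\mc{C}_j$ whenever they share at least one node; any measurement placed at a shared node covers exactly one such edge and saves one output compared to the independent strategy. The minimum cardinality of~$\mc{H}$ is therefore~$|\mc{S}^p|+|\mc{C}|$ minus the largest number of disjoint (parent~SCC, contraction) pairs that can be covered by distinct shared nodes. This count is what the corollary denotes by~$\min(|\mc{S}^p\cap\mc{C}|)$, viewed as the minimum attained by any optimal joint placement, i.e. the maximum overlap exploitable by a single~$\mc{H}$.

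For sufficiency I would exhibit a construction: first choose a maximal set of (SCC, contraction) pairs with pairwise distinct shared nodes, place one measurement at each shared node, then add one additional measurement in every still-uncovered parent SCC and every still-uncovered contraction; by construction this set witnesses both conditions of Lemma~\ref{lem_cent} and has the claimed cardinality. For necessity I would argue by contradiction: any~$\mc{H}$ with strictly fewer measurements must either miss a parent SCC (violating output-connectivity, by the necessity half of Theorem~\ref{thm_scc}) or miss a contraction (violating S-rank recovery, by Theorem~\ref{thm_unmatched}). The controllability version is obtained by the standard dual argument, replacing parent SCCs with child SCCs and contractions with dilations.

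The main obstacle, and the only non-routine step, is the bookkeeping of the overlap term. The contractions in~$\mc{C}$ are, in general, not disjoint from each other (as noted after Definition~\ref{def_contraction}), whereas the parent SCCs are pairwise disjoint; as a result, the naive estimate ``number of nodes shared by some~$\mc{S}^p_i$ and some~$\mc{C}_j$'' overcounts and one must argue at the level of a matching on the bipartite graph above, not at the level of raw node counts. Handling this carefully—showing that no better saving can be realized than a maximum matching and that every maximum matching is realizable—is where the proof requires care; everything else follows immediately from the two preceding theorems.
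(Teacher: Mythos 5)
Your argument is correct and follows essentially the same route as the paper, which gives no formal proof of the corollary but justifies it exactly as you do: by combining Theorems~\ref{thm_scc} and~\ref{thm_unmatched} (in their restated forms without the full-S-rank/SC hypotheses, as discussed in the paragraph preceding the corollary) with the observation that a single measurement placed at a node shared by a parent SCC and a contraction recovers both conditions of Lemma~\ref{lem_cent} at once. Your bipartite-matching reading of the savings term is simply a sharper, more explicit rendering of what the paper denotes by $\min(|\mc{S}^p \cap \mc{C}|)$, so no substantive difference in approach remains.
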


The above corollary generalizes the results in previous work by the authors~\cite{asilomar11, jstsp14}.
In~\cite{asilomar11}, we introduced the concepts of parent SCCs for observability recovery and output-connectivity in networked estimation, while in~\cite{jstsp14} the concepts of contractions and maximum matching are adopted for S-rank recovery. Similar results are stated in~\cite{pequito2015framework}, where the concepts of \textit{non-bottom/top linked SCCs}, \textit{left/right-unmatched nodes}, and \textit{maximum bottom/top assignability index} are introduced. As compared to~\cite{pequito2015framework}, Theorems~\ref{thm_scc} and~\ref{thm_unmatched} and Corollary~\ref{cor_minH}, using dilations/contractions which contain all equivalent nodes in terms of controllability/observability, provide \textit{all} set of nodes satisfying minimal controllability/observability, while~\cite{pequito2015framework}, using right/left-unmatched nodes, gives only \textit{one} minimal set. Note that state nodes in contractions/dilations are all equivalent in terms of observability/controllability. This is particularly of interest, as using the results of our work, one can choose the \textit{minimal-cost} set of measurements/inputs among all possible options. 
Recall that, measuring/controlling the shared node between a parent/child SCC and a contraction/dilation recovers both conditions in Lemma~\ref{lem_cent}. We refer interested readers to~\cite{icassp13} for examples of S-rank-deficient networks with specific structures  containing contractions/dilations.

\section{Main Results} \label{sec_main}
In the previous section, we derive the minimal conditions  to solve Problem~\ref{probA}. These conditions are required to find the solution for Problem~\ref{prob}. Assuming the conditions in Theorem~\ref{thm_scc} and Theorem~\ref{thm_unmatched} (or Corollary~\ref{cor_minH}) are satisfied, we extend the results to sufficient conditions to solve Problem~\ref{prob}.

\begin{thm} \label{thm_main1}
	Let~$\mbox{S-rank}(\mc{A}_2)<n$ and~$\mc{H}$ includes the minimum  measurements/control-inputs for S-rank recovery of ~$\mc{A}_2$ (according to Theorem~\ref{thm_unmatched}). Then,~$|\mc{H}_C|_0$ for  S-rank recovery of~$(\mc{A}_1 \otimes \mc{A}_2)$ is minimized if~$\mbox{S-rank}(\mc{A}_1)=N$. Further, for a self-damped network~$\mc{G}_1$, a sufficient condition on the matrix~$\mc{H}_C$ to recover the S-rank is that~$\mc{H}_C = I_N \otimes \mc{H}$, where~$I_N$ is the~$N$ by~$N$ identity matrix.
\end{thm}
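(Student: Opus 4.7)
The plan is to treat the theorem as two linked claims and prove them with the machinery already established: Lemma~\ref{lem_srankproduct} (S-rank of a Kronecker product), Lemma~\ref{lem_srank2} (S-rank recovery via a stacked matrix), and Lemma~\ref{lem_srank} (full S-rank via a spanning disjoint cycle family). Throughout, I will use the standard fact that setting some free entries of a structured matrix to zero only lowers the rank, so exhibiting any realization with rank $nN$ suffices to conclude $\mbox{S-rank}=nN$.

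For the minimization claim, I would start by rewriting the S-rank recovery requirement on $\mc{A}_1\otimes\mc{A}_2$ via Lemma~\ref{lem_srank2} as the need to pick $\mc{H}_C$ so that $\mbox{S-rank}\bigl(\bigl[\begin{smallmatrix}\mc{A}_1\otimes\mc{A}_2\\ \mc{H}_C\end{smallmatrix}\bigr]\bigr)=nN$. Since each output row can contribute at most one to the S-rank, this forces
\begin{equation}\nonumber
|\mc{H}_C|_0 \;\geq\; nN-\mbox{S-rank}(\mc{A}_1\otimes\mc{A}_2)\;=\;nN-\mbox{S-rank}(\mc{A}_1)\cdot\mbox{S-rank}(\mc{A}_2),
\end{equation}
where the second equality uses Lemma~\ref{lem_srankproduct} (generically). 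Minimizing this lower bound over admissible factor networks is then just maximizing $\mbox{S-rank}(\mc{A}_1)$, whose largest possible value is $N$. Combining with $|\mc{H}|_0 = n-\mbox{S-rank}(\mc{A}_2)$ (which is exactly the minimum from Theorem~\ref{thm_unmatched} and Lemma~\ref{lem_srank2}) yields the minimum $|\mc{H}_C|_0 = N\cdot|\mc{H}|_0$, attained only when $\mbox{S-rank}(\mc{A}_1)=N$.

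For the sufficiency claim, first invoke Lemma~\ref{lem_srank}: a self-damped $\mc{G}_1$ has a self-loop at every node, which is itself a spanning disjoint cycle family, so $\mbox{S-rank}(\mc{A}_1)=N$, matching the assumption that made the lower bound tight. The key structural step is then to show that the choice $\mc{H}_C = I_N\otimes\mc{H}$ achieves $\mbox{S-rank}\bigl(\bigl[\begin{smallmatrix}\mc{A}_1\otimes\mc{A}_2\\ I_N\otimes\mc{H}\end{smallmatrix}\bigr]\bigr)=nN$. The trick is to evaluate the stacked structured matrix at the realization where every off-diagonal free entry of $A_1$ is set to zero while the diagonal entries $A_{1,ii}$ (which are guaranteed to be free variables by self-dampedness) are kept generic non-zero. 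Then $A_1\otimes A_2$ collapses to the block-diagonal matrix $\mbox{diag}(A_{1,11}A_2,\dots,A_{1,NN}A_2)$, and stacking with $I_N\otimes H$ produces a block-diagonal matrix whose $i$-th diagonal block is $\bigl[\begin{smallmatrix}A_{1,ii}A_2\\ H\end{smallmatrix}\bigr]$. Since a nonzero scalar multiple preserves rank, each such block has S-rank $n$ by the assumption that $\mc{H}$ already recovers the S-rank of $\mc{A}_2$. The block-diagonal stack therefore admits a realization of rank $nN$, hence $\mbox{S-rank}(\cdot)=nN$, which is what we needed.

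The main obstacle is the sufficiency step: the Kronecker product $\mc{A}_1\otimes\mc{A}_2$ is generally densely coupled across the $N$ copies of $\mc{G}_2$, so a naive per-slice argument does not directly apply. The crucial observation that resolves this is that structural rank is the maximum rank over all admissible realizations, which allows us to switch off the inter-slice couplings (the off-diagonal entries of $A_1$) without leaving the structure class, exposing the block-diagonal skeleton carried by the self-loops. Everything else—tightness of the bound, minimality, and interpretation of $I_N\otimes\mc{H}$ as replicating the minimum sensing pattern once per node of $\mc{G}_1$—then follows routinely from Lemmas~\ref{lem_srankproduct}--\ref{lem_srank2}.
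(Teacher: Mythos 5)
Your proof is correct, and the minimization part follows the paper's own argument essentially verbatim: bound $|\mc{H}_C|_0$ below by $nN-\mbox{S-rank}(\mc{A}_1\otimes\mc{A}_2)$, use Lemma~\ref{lem_srankproduct} to write this generically as $nN-\mbox{S-rank}(\mc{A}_1)\cdot\mbox{S-rank}(\mc{A}_2)$, and maximize $\mbox{S-rank}(\mc{A}_1)$. For the sufficiency part you rely on the same key structural fact as the paper---self-dampedness of $\mc{G}_1$ places a copy of $\mc{A}_2$ in every diagonal $n\times n$ block of $\mc{A}_1\otimes\mc{A}_2$, so $I_N\otimes\mc{H}$ can repair the deficiency block by block---but your execution differs. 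The paper splits $A_1$ into columns $[\mathbf{a}_i]$, argues that $\mc{H}$ recovers the S-rank of each $[\mathbf{a}_i]\otimes\mc{A}_2$, and then asserts that the side-by-side concatenation stacked over $I_N\otimes\mc{H}$ has S-rank $Nn$, leaving the final rank-additivity step largely implicit. You instead exhibit an explicit realization by switching off the off-diagonal free entries of $A_1$, which collapses the stacked matrix to a block-diagonal one whose $i$th block is a nonzero scalar multiple of $A_2$ stacked over $H$; a single common realization of $(A_2,H)$ with $\mbox{rank}=n$ makes every block full column rank simultaneously, giving rank $nN$, and the standard genericity/perturbation fact then certifies the structural rank. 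This realization-based route gives a more airtight justification of the last step and also makes transparent why the parameter dependence across the $N$ copies (all blocks share the same $A_2$ and $H$ free variables) is harmless. The only small caveats, shared with the paper, are that the first part optimizes a lower bound (achievability is settled only by the second part, under the self-damped assumption) and that zeroing free entries should formally be paired with the measure-zero perturbation argument you allude to; neither is a genuine gap relative to the paper's level of rigor.
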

\begin{proof}
	 Let~$\mbox{S-rank}(\mc{A}_2)=m<n$. Based on Lemma~\ref{lem_srankproduct},
	\begin{equation} \nonumber
	\mbox{S-rank}(\mc{A}_1 \otimes \mc{A}_2) \geq \mbox{S-rank}(\mc{A}_1)\cdot m.
	\end{equation}	
	Following Lemma~\ref{lem_srank2} for S-rank recovery,~$\mc{H}_C$ must be designed such that
	\begin{equation} \nonumber
	\mbox{S-rank}\left(
	\begin{array}{c}
	\mc{A}_1 \otimes \mc{A}_2\\
	\mc{H}_C
	\end{array}
	\right) = Nn.
	\end{equation}
	Noting that every non-zero entry in~$\mc{H}_C$  may recover (at most) one S-rank deficiency of~$\mc{A}_1 \otimes \mc{A}_2$ and from Lemma~\ref{lem_srankproduct}, we have
	\begin{eqnarray} \nonumber
	\min(|\mc{H}_C|_0) &=&\min(Nn-\mbox{S-rank}(\mc{A}_1 \otimes \mc{A}_2)) \\ \nonumber
	&=& Nn-\max(\mbox{S-rank}(\mc{A}_1 \otimes \mc{A}_2))\\ \nonumber
	&\leq& Nn-\max(\mbox{S-rank}(\mc{A}_1) \mbox{S-rank}(\mc{A}_2)) \\ \nonumber
	&\leq& Nn-\max(\mbox{S-rank}(\mc{A}_1))m.
	\end{eqnarray}
	Therefore, to minimize~$|\mc{H}_C|_0$ we need~$\mbox{S-rank}(\mc{A}_1)=N$, which results in
	\begin{eqnarray} \label{eq_H_c}
	\min(|\mc{H}_C|_0) \leq N(n-m).
	\end{eqnarray}	
	From Lemma~\ref{lem_srankproduct} the equality in equation~\eqref{eq_H_c} holds for almost all choices of ~$A_1$ and~$A_2$, i.e., for almost all cases of network products we have~$\min(|\mc{H}_C|_0) = N(n-m)$.
		
	To prove the second part, we borrow the ideas from the structural methodology in~\cite{rein_book}. From Lemma~\ref{lem_srank2}, we have
	\begin{equation}  \nonumber
	\mbox{S-rank}\left(
	\begin{array}{c}
	\mc{A}_2\\
	\mc{H}
	\end{array}
	\right) = n.
	\end{equation}	
	Let~$A_1=[\mathbf{a}_1, \mathbf{a}_2,\dots, \mathbf{a}_N]$, where~$[\mathbf{a}_i]$ represents a column of~$A_1$. From Lemma~\ref{lem_srank}, network~$\mc{G}_1$ being self-damped implies that ~$\mbox{S-rank}[\mathbf{a}_1, \mathbf{a}_2,\dots, \mathbf{a}_N]=N$ and ~$\mbox{S-rank}[\mathbf{a}_i]=1, ~i=\{1,\dots,N\}$. Therefore, from the definition of S-rank and from Lemma~\ref{lem_srankproduct}, we have 	
	\begin{equation}
	\mbox{S-rank}\left(
	\begin{array}{c}
	[\mathbf{a}_i] \otimes \mc{A}_2\\
	\mc{H}
	\end{array}
	\right) = n.
	\end{equation}
	From another perspective, since the~$i$th entry of column~$[\mathbf{a}_i]$ is non-zero and following the definition of the Kronecker product, the~$i$th~$n \times n$ block of the~$nN \times n$ matrix~$[\mathbf{a}_i] \otimes \mc{A}_2$ represents the matrix~$\mc{A}_2$. We note that~$\mc{H}$ recovers the S-rank of~$\mc{A}_2$; this is a sufficient condition for S-rank recovery of~$[\mathbf{a}_i] \otimes \mc{A}_2$.
	
	Based on the Kronecker product definition, the structure of~$\mc{A}_1 \otimes \mc{A}_2$ is obtained as a side by side concatenation of matrices~$[\mathbf{a}_i] \otimes \mc{A}_2,~i=\{1,\dots,N\}$, where~$\mbox{S-rank}[\mathbf{a}_1, \mathbf{a}_2,\dots, \mathbf{a}_N]=N$. Having~$\mc{G}_1$ as a self-damped network implies that the~$i$th diagonal block of~$[\mathbf{a}_1, \mathbf{a}_2,\dots, \mathbf{a}_N] \otimes \mc{A}_2$ is~$\mc{A}_2$.
	Therefore, the~$i$th block of~$I_N \otimes \mc{H}$ recovers the S-rank of~$i$th block of~$\mc{A}_1 \otimes \mc{A}_2$ as a sufficient condition, i.e.,
%	\begin{equation}  \nonumber
%	\mbox{S-rank}\left(
%	\begin{array}{c|c|c}
%	[a_1].\mc{A}_2 &   & [a_N].\mc{A}_2\\
%	\mc{H} & \ddots & \mb{0} \\
%	\mb{0} &  & \mc{H} 	
%	\end{array}
%	\right) = Nn
%	\end{equation}
	\begin{equation}  \nonumber
	\mbox{S-rank}\left(
	\begin{array}{c|c|c|c}
	[\mathbf{a}_1] \otimes \mc{A}_2 & [\mathbf{a}_2] \otimes \mc{A}_2& \dots  & [\mathbf{a}_N] \otimes \mc{A}_2\\
	\hline
	\mc{H} & 0 & & 0 \\
	0 & \mc{H} & & \vdots \\
	\vdots & 0 & \ddots & \vdots \\
	\vdots & \vdots & & 0\\
	0 & 0& & \mc{H}	
	\end{array}
	\right) = Nn,
	\end{equation}
%	\begin{equation}  \nonumber
%	\mbox{S-rank}\left(
%	\begin{array}{c|c|c}
%	[a_1].\mc{A}_2 & \dots  & [a_N].\mc{A}_2\\
%	\hline
%	\mc{H} &  & 0 \\
%	0 &  & \vdots \\
%	\vdots &\ddots  & 0\\
%	0 &  & \mc{H}	
%	\end{array}
%	\right) = Nn
%	\end{equation}	
where the above can be compactly written as
	\begin{equation} \nonumber
	\mbox{S-rank}\left(
	\begin{array}{c}
	\mc{A}_1 \otimes \mc{A}_2\\
	I_N \otimes \mc{H}
	\end{array}
	\right) = Nn.
	\end{equation}
\end{proof}
Theorem~\ref{thm_main1} implies that given a rank-deficient system~$A_2$ (associated with the replica network~$\mc{G}_2$) when the~$\mbox{S-rank}(A_1)<N$ (i.e., the factor network~$\mc{G}_1$ is rank-deficient) more number of observations of the composite network are required. We further remind the reader  that~$\mc{H}_C=I_N \otimes \mc{H}$ is a sufficient condition for observability of the composite network~$\mc{G}_1 \times \mc{G}_2$.
%\begin{rem}
%	We remind the reader that~$\mc{H}_C=I_N \otimes \mc{H}$ in  Theorem~\ref{thm_main1} is a sufficient condition. There may be special case of Kronecker composite networks that require less measurements/inputs for observability/controllability recovery.
%\end{rem}

\begin{thm} \label{thm_main2}
	Let~$\mbox{S-rank}(\mc{A}_2)=n$ and~$\mc{H}$~includes~the~minimum  measurements/control-inputs for output/input-connectivity of~$\mc{G}_2$ (according to Theorem~\ref{thm_scc}). To minimize ~$|\mc{H}_C|_0$ for output/input-connectivity of~$(\mc{G}_1 \times \mc{G}_2)$, it is sufficient that~$\mc{G}_1$ is SC and self-damped. Then,~$|\mc{H}_C|_0=|\mc{H}|_0$.
\end{thm}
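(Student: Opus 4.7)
My plan is to reduce Theorem~\ref{thm_main2} to a count of parent SCCs of the composite network. Since $\mbox{S-rank}(\mc{A}_2)=n$ by hypothesis and $\mc{G}_1$ is self-damped, Lemma~\ref{lem_srank} gives $\mbox{S-rank}(\mc{A}_1)=N$, and Lemma~\ref{lem_srankproduct} then yields $\mbox{S-rank}(\mc{A}_1\otimes\mc{A}_2)=Nn$. This lets me apply Theorem~\ref{thm_scc} to $\mc{G}_1\times\mc{G}_2$: the minimum number of outputs needed for output-connectivity of the product equals the number of its parent SCCs. So the proof reduces to showing that the parent SCCs of the product are in bijection with the parent SCCs of $\mc{G}_2$; concretely, that every parent SCC $\mc{S}^p_k$ of $\mc{G}_2$ lifts to a \emph{single} parent SCC $\{(i,j):i\in\mc{G}_1,\,j\in\mc{S}^p_k\}$ of the product, and that these exhaust the parent SCCs of the product.

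The heart of my argument will be a lifting lemma stating that, for every SCC $\mc{S}$ of $\mc{G}_2$, the set $\{(i,j):i\in\mc{G}_1,\,j\in\mc{S}\}$ is strongly connected in $\mc{G}_1\times\mc{G}_2$. I will prove this using the characteristic property of the Kronecker product: a walk of length~$k$ from $(i_1,j_1)$ to $(i_2,j_2)$ in the product corresponds exactly to a pair of walks of the \emph{same} length~$k$, one from $i_1$ to $i_2$ in $\mc{G}_1$ and one from $j_1$ to $j_2$ in $\mc{G}_2$. The plan is then to build walks of matching length: strong connectivity of $\mc{G}_1$ gives a walk $i_1\to i_2$, and the self-loop at $i_2$ (guaranteed by the self-damped hypothesis) allows me to pad it to any length above a threshold; strong connectivity of $\mc{S}$ gives a walk $j_1\to j_2$ inside $\mc{S}$, which I will extend by a multiple of the length of a cycle through $j_2$. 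Here the $\mbox{S-rank}(\mc{A}_2)=n$ hypothesis is essential, since by Lemma~\ref{lem_srank} it forces every singleton SCC of $\mc{G}_2$ to carry a self-loop, while every larger SCC already admits a cycle through $j_2$ by strong connectivity. Choosing the two padding parameters to equalize the lengths will then yield the required product-walk.

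Once the lifting lemma is in hand, the parent property of each lifted set is immediate: any product-edge leaving $\{(i,j):j\in\mc{S}^p_k\}$ would require a $\mc{G}_2$-edge leaving $\mc{S}^p_k$, which the parent-SCC property of $\mc{S}^p_k$ forbids. To show these lifted sets exhaust the parent SCCs of the product, I will pick a representative $(i_0,j_0)$ of any other product-SCC: then $j_0$ lies in a non-parent SCC of $\mc{G}_2$, the partial order delivers a $\mc{G}_2$-path from $j_0$ into some parent $\mc{S}^p_k$, and the same padding trick lifts this path to a product-path exiting the SCC of $(i_0,j_0)$. This establishes $|\mc{H}_C|_0=|\mc{H}|_0$ for observability, and the controllability statement follows by reversing edges and swapping parent with child SCCs. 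The main obstacle I anticipate is the length-matching step in the lifting lemma: it genuinely requires \emph{both} the self-loops in $\mc{G}_1$ (to pad freely in one factor) and a cycle through every node of every $\mc{G}_2$-SCC (provided by the full S-rank hypothesis), because without either condition one is thrown into a delicate number-theoretic problem about gcds of cycle lengths. This is exactly why self-damping of $\mc{G}_1$ is a sufficient, but not generally necessary, condition.
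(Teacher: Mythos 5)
Your proposal is correct, and at the top level it mirrors the paper's strategy: reduce the theorem to applying Theorem~\ref{thm_scc} to the full S-rank product $\mc{A}_1\otimes\mc{A}_2$ and then show that the parent SCCs of $\mc{G}_1\times\mc{G}_2$ are in bijection with those of $\mc{G}_2$ when $\mc{G}_1$ is SC and self-damped. Where you genuinely diverge is in how the key structural claim is established. The paper works at the matrix level: strong connectivity of $\mc{G}_1$ gives irreducibility of $\mc{A}_1$ (Lemma~\ref{lem_SC}), self-damping makes every diagonal block of $\mc{A}_1\otimes\mc{A}_2$ a copy of $\mc{A}_2$, and a row/column-permutation argument (illustrated in Fig.~\ref{fig_matrix2}) merges the irreducible sub-blocks corresponding to each parent SCC of $\mc{G}_2$ into one larger irreducible block with no outgoing blocks, preserving the partial order of SCCs. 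You instead prove a graph-theoretic lifting lemma via the walk characterization of the Kronecker (tensor) product, matching walk lengths by padding with the self-loops of $\mc{G}_1$ and cycles through nodes of each $\mc{G}_2$-SCC, with the full S-rank hypothesis supplying self-loops on singleton SCCs (a zero row/column would otherwise violate Lemma~\ref{lem_srank}). What your route buys is an explicit treatment of the connectivity subtlety that is specific to tensor products --- strong connectivity of the factors alone does not give strong connectivity of the lifted sets because of periodicity/gcd issues, and your length-matching step makes precise exactly why self-damping of $\mc{G}_1$ removes this obstruction, something the paper's block-permutation picture handles only informally. Your exhaustion argument (any product SCC over a non-parent SCC of $\mc{G}_2$ inherits an outgoing path from the partial order of $\mc{G}_2$) likewise corresponds to, and sharpens, the paper's remark that the partial order of SCCs is preserved in the composite graph; both then conclude $|\mc{H}_C|_0=|\mc{S}^p|=|\mc{H}|_0$ and obtain controllability by duality.
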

\begin{proof}
	The structural approach for the proof is based on the ideas from the methodology in chapter~$1$ of~\cite{rein_book}. We prove the case for observability and the dual case of controllability similarly follows. The minimal number of measurements for observability of composite network~$\mc{G}_1 \times \mc{G}_2$ is equal to the number of measurements for observability of~$\mc{G}_2$ according  to Theorem~\ref{thm_scc}, i.e.,~$\min(|\mc{H}_C|_0)=|\mc{H}|_0$. It is sufficient to prove that for self-damped and SC network~$\mc{G}_1$, we have~$|\mc{H}_C|_0=|\mc{H}|_0$. Based on the results of Corollary~\ref{cor_minH}, the matrix~$\mc{A}_2$ being full-rank implies that~$|\mc{H}|_0=|\mc{S}^p|$, and therefore, it is sufficient to prove that~$|\mc{H}_C|_0 = |\mc{S}^p|$.
	
	Since~$\mc{G}_1$ is self-damped, the non-vanishing main diagonal blocks in the adjacency matrix of Kronecker composite network,~$\mc{A}_1 \otimes \mc{A}_2$, are reducible structured matrices having the structure of~$\mc{A}_2$. This is because, based on the Kronecker product definition, multiplying  each scalar entry in~$A_2$ does not change its structure~$\mc{A}_2$. Assume in~$\mc{G}_2$ a child SCC~$\mc{S}^c_i$  connected via a path to a parent SCC~$\mc{S}^p_j$, i.e.,~$\mc{S}^c_i \prec \mc{S}^p_j$. Then, according to Fig.~\ref{fig_matrix}, in the structured matrix~$\mc{A}_2$ there is a sequence of non-zero blocks from~$i$th irreducible diagonal block to~$j$th irreducible diagonal block.
	
	Further, the non-diagonal blocks of~$\mc{A}_1 \otimes \mc{A}_2$ are mapped based on the irreducible structure of~$\mc{A}_1$, see Fig.~\ref{fig_matrix2}.
	\begin{figure}
		\centering
		{\includegraphics[width=3.5in]{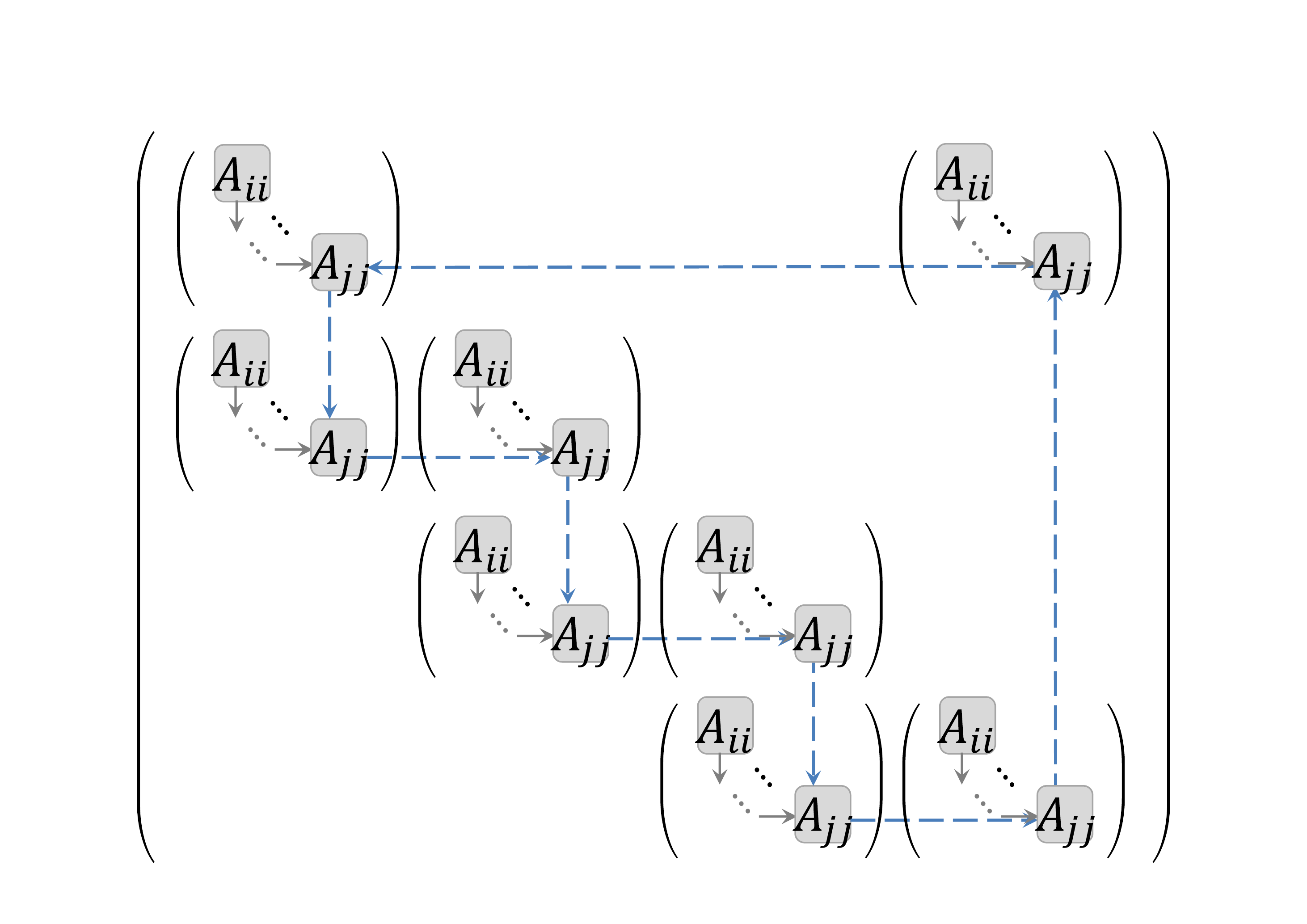}}
		\caption{ This figure shows the structure of~$\mc{A}_1 \otimes \mc{A}_2$. Based on the definition of Kronecker product, each block represents the structure of~$\mc{A}_2$, including an irreducible sub-matrix~$A_{ii}$ (a child SCC in~$\mc{G}_2$) connected to an irreducible sub-matrix~$A_{jj}$ (a parent SCC in~$\mc{G}_2$). The blocks are mapped based on the irreducible and self-damped structure of~$\mc{A}_1$. The irreducible sub-matrices in blocks of~$\mc{A}_2$ make a larger irreducible block via the path shown by the dashed arrows, thus, making a larger irreducible block (by proper row and column permutations) which represents a larger parent SCC in the Kronecker composite graph. }
		\label{fig_matrix2}
	\end{figure}
	Following the results of Lemma~\ref{lem_SC}, there exists a sequence of non-diagonal blocks in~$\mc{A}_1 \otimes \mc{A}_2$, each having the structure of~$\mc{A}_2$, sharing no hyper-row and no hyper-column\footnote{By a hyper-row we refer to a~$n$ by~$nN$ row matrix made by block matrices~$\mc{A}_2$ in the Kronecker matrix~$\mc{A}_1 \otimes \mc{A}_2$. The definition of hyper-column similarly follows.}. As illustrated in Fig.~\ref{fig_matrix2}, the irreducible sub-blocks of parent SCC~$\mc{S}^p_j$ in each~$\mc{A}_2$ block makes a strongly-connected path with each-other via the irreducible structure of~$\mc{A}_1$. This implies that by proper permutation of rows and columns of~$\mc{A}_1 \otimes \mc{A}_2$, one can find a larger irreducible block representing a larger SCC in~$\mc{G}_1 \times \mc{G}_2$. Note that this SCC has no outgoing links and therefore represents a parent SCC in the composite graph~$\mc{G}_1 \times \mc{G}_2$. This generally holds for any type of SCCs in~$\mc{G}_2$, i.e., having a self-damped and SC network~$\mc{G}_1$, for any SCC in the network~$\mc{G}_2$, there exists a larger SCC in the composite network~$\mc{G}_1 \times \mc{G}_2$, while the partial order follows the partial order of SCCs in~$\mc{G}_2$. Therefore, according to Theorem~\ref{thm_scc}, the number of sufficient measurements for observability of the composite network is equal to the number of parent SCCs in~$\mc{G}_1 \times \mc{G}_2$, which is equal to the number of parent SCCs in~$\mc{G}_2$, i.e.,~$|\mc{H}_C|_0 = |\mc{S}^p|$.
\end{proof}

It should be noted that although Theorem~\ref{thm_main2} is stated for full S-rank networks, one can easily restate and prove the theorem for S-rank-deficient networks possessing all the measurements/control-inputs for S-rank recovery according to Theorem~\ref{thm_unmatched}. In other words, Theorem~\ref{thm_main2} holds for either a full S-rank network or a S-rank-deficient network with measurements/inputs satisfying conditions in Theorem~\ref{thm_unmatched}. In such case, one can reduce~$|\mc{H}_C|_0$ by measuring/controlling the shared node between a parent/child SCC and a contraction/dilation.
It should be noted that although Theorems~\ref{thm_main1} and~\ref{thm_main2} consider the self-damped condition for  factor network~$\mc{G}_1$, this is a typical assumption in distributed estimation literature, e.g., see~\cite{asilomar11,usman_cdc:10,park2017design}. This assumption implies that every sensor/estimator applies its own information to develop the state estimation of the dynamical system.

\section{Application in Distributed Estimation} \label{sec_app}
In distributed estimation a network of sensors/agents, represented by network~$\mc{G}_1$, are tasked to monitor a dynamical system, represented by network~$\mc{G}_2$ based on the partial measurements observed by agents.
The observability results in this paper finds direct application in \textit{single time-scale} distributed estimation.
Such distributed estimation protocols, as compared to \textit{multi time-scale} distributed estimation as in Kalman consensus filters~\cite{olfati2:05}, have the advantages of low communication on sensors and no constraints on local observability of agents. In this scenario, the structure of the underlying dynamical system  dictates the structure of sensor network~\cite{doostmohammadian2017recovery}. The proposed protocol in our previous works~\cite{jstsp,jstsp14} is an example application of composite Kronecker network product. The protocol has two steps of \textit{prediction fusion} and \textit{observation fusion} as follows:
\begin{eqnarray}\label{eq_estimation1}
\widehat{\mb{x}}^i_{k|k-1} &=& \sum_{j\in\mathcal{N}(i)} W_{ij}A\widehat{\mb{x}}^j_{k-1|k-1}, \\ \label{eq_estimation2}
\widehat{\mb{x}}^i_{k|k} &=&\widehat{\mb{x}}^i_{k|k-1} + K_k^i \sum_{j\in \mc{N}(i)}H_j^\top \left(\mb{y}^j_k-H_j\widehat{\mb{x}}^i_{k|k-1}\right).
\end{eqnarray}
where~$W = [W_{ij}]$ is the adjacency  matrix of~$\mc{G}_1~$,~$A$ is the adjacency  matrix of~$\mc{G}_2~$,~$H_j$ is the measurement matrix at agent~$j$,~$\widehat{\mb{x}}^i_{k|k-1}$ is the state prediction at agent~$i$ given all the neighboring  measurements up to time~$k-1$, and~$\widehat{\mb{x}}^i_{k|k}$ is the state estimate at agent~$i$ given all the neighboring measurements up to time~$k$. The block-diagonal matrix~${K}_k=\mbox{blockdiag}[K_k^1,\ldots,K_k^N]$ represents the gain matrix. The error dynamics for the protocols~\eqref{eq_estimation1}-\eqref{eq_estimation2} is as follows~\cite{ISJ}:
\begin{eqnarray}\label{eq_err1}
\mb{e}_{k} = (W\otimes A - K_kD_H(W\otimes A))\mb{e}_{k-1} +
\mb{q}_k,
\end{eqnarray}
where~$\mb{q}_k$ collects the noise terms,~$\mb{e}_{k}^i = \mb{x}_{k|k} - \widehat{\mb{x}}^i_{k|k}$ is the estimation error at agent~$i$,~$\mb{e}_{k}$ is the global estimation error defined as,
\begin{eqnarray}\nonumber
\mb{e}_{k} &=& \left( \begin{array}{c}
\mb{e}_{k}^1\\
\vdots \\
\mb{e}_{k}^N
\end{array}\right)
\end{eqnarray}
and~$D_H$ represents the global measurement matrix associated with distributed estimation.
%\begin{eqnarray}
%D_C \triangleq \left(
%\begin{array}{cccc}
%\sum_{j\in \mathcal{N}(1)}C_j^TC_j\\
%&\ddots\\
%& &\sum_{j\in \mathcal{N}(N)}C_N^TC_N\
%\end{array}
%\right)
%\end{eqnarray}
It is well known that the error dynamics~\eqref{eq_err1} are bounded if the pair~$(W \otimes A, D_H)$ is observable, also known as \textit{distributed observability}~\cite{jstsp,jstsp14}. The distributed observability, in fact, refers to the observability of the composite network~$\mc{G}_1 \times \mc{G}_2$, where~$\mc{G}_1$ and~$\mc{G}_2$ are respectively associated to matrices sensor communication~$W$ and the system matrix~$A$.

Following Theorem~\ref{thm_main1}, if the dynamical system~$A$ is S-rank-deficient, a full S-rank sensor network minimizes the required number of measurements for distributed observability. Further, if the sensor network is self-damped then for distributed observability it is sufficient that every sensor possesses (in its neighborhood) all the measurements required  for observability of the dynamical system. This assumption, for example, is given in distributed estimation protocol proposed by~\cite{das2015distributed}. Note that the self-link at each agent implies that each agent uses its own state predictions in equation~\eqref{eq_estimation1}, which is a typical assumption in distributed estimation scenarios. Following Theorem~\ref{thm_main2}, if the dynamical system~$A$ is full S-rank, an SC sensor network  ensures distributed observability. Similar result is used  in distributed estimation literature, as in~\cite{sayedtu12}. Further, having an SC self-damped sensor network, Theorem~\ref{thm_main2} implies that the minimum number of measurements ensuring observability of dynamical system also ensures distributed observability of the distributed estimator~\eqref{eq_estimation1}-\eqref{eq_estimation2}. 

In the same line of research,~\cite{zhou2013coordinated} proposes a distributed estimator to predict the state of networked dynamical system. Sensors take local measurements of the system and share their predictions over a network with exact \textit{same} structure as the networked  system. This implies that in the composite network scenario, the structure of the factor network~$\mc{G}_1$ and the replica network~$\mc{G}_2$ must be the same. The results of~\cite{zhou2013coordinated} are further extended in~\cite{zhou2015controllability} where a distributed algorithm for observability analysis of networked system is proposed based on PBH test. Also, some necessary and sufficient conditions on system matrix are given that guarantee the performance of the distributed estimator is equivalent with the lumped Kalman filter. As a comparison, in the proposed distributed estimator~\eqref{eq_estimation1}-\eqref{eq_estimation2}, the sensor network~$\mc{G}_1$ and the dynamical system~$\mc{G}_2$ have different structures, as we show that the structure of~$\mc{G}_1$ depends on the S-rank of ~$\mc{G}_2$. For example, if~$\mc{G}_2$ is full S-rank, then a \textit{sparsely-connected} SC network~$\mc{G}_1$ is sufficient for distributed estimation, while the estimation performance could be improved by adopting more densely-connected sensor networks. In~\cite{jstsp,icassp13}, we clearly compare the performance of  distributed estimator~\eqref{eq_estimation1}-\eqref{eq_estimation2} with centralized Kalman filter. We again recall that here our main goal is the minimal sufficient conditions to ensure observability, while estimation performance is not the focus of this paper and left for future research direction. Finally, it should be mentioned that the results of this paper are not restricted to distributed estimation, but also find application in, for example, networked control systems and other scenarios mentioned in Section~\ref{sec_intro}.

\section{Illustrative Example and Simulation} \label{sec_sim}
In this section, we consider a composite graph example and investigate its observability based on the results of previous sections. Consider the replica network~$\mc{G}_2$ as shown in Fig.~\ref{fig_example}.
\begin{figure}
	\centering
	{\includegraphics[width=3.5in]{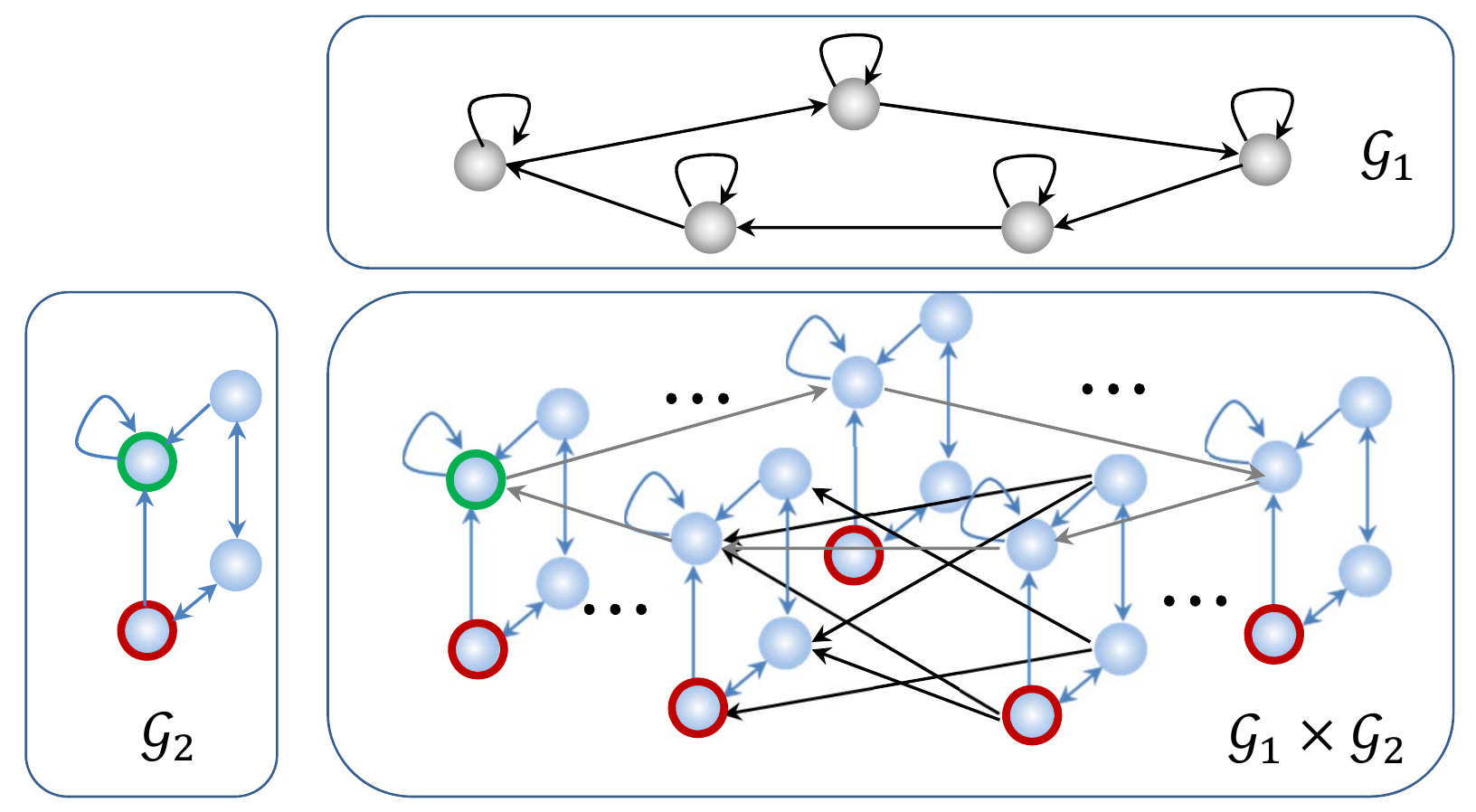}}
	\caption{ This figure shows an example of Kronecker product~$\mc{G}_1 \times \mc{G}_2$ of two graphs. The factor network~$\mc{G}_1$ is designed such that it is SC and self-damped. The links among two replica networks are shown and the rest of the links are skipped for clarity of the figure. Only the links involved in the main parent SCC of the composite network are shown in gray color. The red-border nodes are observed as sufficient condition for S-rank recovery (Theorem~\ref{thm_main1}), and the green-border node is observed as sufficient condition for output-connectivity (Theorem~\ref{thm_main2}).}
	\label{fig_example}
\end{figure}
This graph contains two SCCs, among which the self-cycle has no outgoing link and therefore is a parent SCC. The S-rank deficiency of the network~$\mc{G}_2$ is~$1$ implying existence of one unmatched node. We find the unmatched node using the \textit{Dulmage-Mendelsohn} algorithm~\cite{dulmage58}. For observability, the green-border node in the network~$\mc{G}_2$ is observed as a parent SCC (Theorem~\ref{thm_scc}) and the red-border node is observed as the unmatched node (Theorem~\ref{thm_unmatched}). The goal is to design the factor network~$\mc{G}_1$ to minimize the number of sufficient measurements for the observability of the Kronecker composite network~$\mc{G}_1 \times \mc{G}_2$. According to Theorem~\ref{thm_main1}, for minimal S-rank recovery, it is sufficient that~$\mc{G}_1$ is full S-rank. We consider the network~$\mc{G}_1$ to be self-damped. Then we have~$\mc{H}_C = I_N \otimes \mc{H}$, implying the observation of the unmatched node in every replica network in~$\mc{G}_1 \times \mc{G}_2$, as shown by red-border nodes in Fig.~\ref{fig_example}. According to Theorem~\ref{thm_main2}, for output-connectivity, it is sufficient that the factor network~$\mc{G}_1$ is SC. Following the results of Theorem~\ref{thm_main2}, there is one parent SCC in the composite network~$\mc{G}_1 \times \mc{G}_2$ as shown in Fig.~\ref{fig_example}, and the observed node is shown by green-border.

The composite network may represent a sensor network monitoring a dynamical system as discussed in Section~\ref{sec_app}. The distributed estimation performance under the protocols~\eqref{eq_estimation1}-\eqref{eq_estimation2} is compared with Kalman filter in our previous works~\cite{jstsp,icassp13}. Here, we apply centralized Kalman filter on  the composite network representing  a linear system as follows:
\begin{eqnarray}
x(k+1) = (A_1 \otimes A_2) x(k) + v(k).
\end{eqnarray}
We consider initial state ~$x(0)$ to be random. The entries in~$A_1 \otimes A_2$ (link weights in~$\mc{G}_1 \times \mc{G}_2$) are considered randomly such that the composite system is potentially unstable with~$\rho=1.93$, where~$\rho$ is the spectral radius of the system matrix. Note that the random link weight consideration is a result of the generic approach adopted in this work, which implies that the observability results hold for almost all values of system parameters. We consider the system noise~$v(k)$ and measurement noise as~$\mc{N}(0,0.05)$ and perform  simulation over~$1000$ Monte-Carlo trials. The time evolution of the Mean Squared Estimation Error (MSEE) using the~$6$ node measurements  is shown in Fig.~\ref{fig_sim}. Note that although the composite system is potentially unstable, MSEE is bounded steady-state stable implying that the composite system is observable.
\begin{figure}
	\centering
	{\includegraphics[width=3.5in]{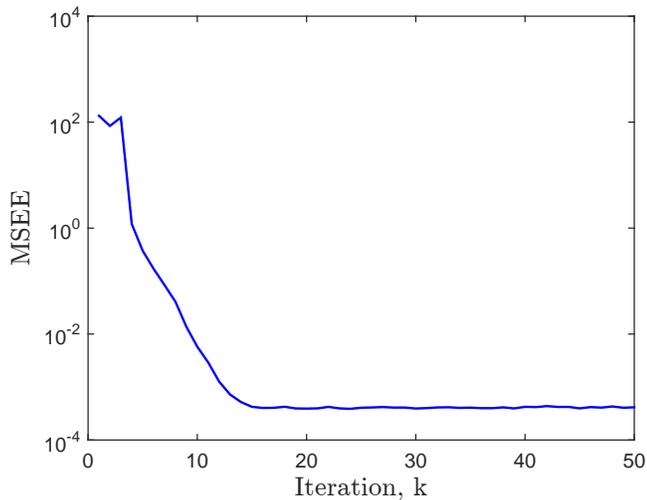}}
	\caption{ This figure shows the time evolution of MSEE of the Kalman filtering over the composite network in Fig.~\ref{fig_example}. }
	\label{fig_sim}
\end{figure}

\section{Concluding Remarks} \label{sec_con}
In this paper, we study the minimal sufficient conditions for observability and controllability of the Kronecker composite networks based on the S-rank and strong-connectivity of its constituent networks. Particularly, the results in Section~\ref{sec_main} are mainly discussed for full S-rank and self-damped constituent networks, which are common assumptions, for example, in distributed estimation. As a general comment, note that controllability and observability are dual concepts and all the results in this paper can be extended to other case by transposing the adjacency matrix and reversing the link directions in the network. We note that the algorithm to find the S-rank of the network is based on Dulmage-Mendelsohn decomposition~\cite{dulmage58} with computational complexity~$\mc{O}(n^{2.5})$. The well-known algorithm for classification of SCCs and defining their partial order is \textit{Depth-First-Search}~\cite{algorithm} with complexity~$\mc{O}(n^2)$. The polynomial-order algorithms support the application in large-scale systems. As the direction of future research, optimal design of large-scale sensor networks for distributed estimation and large-scale networked control systems is a promising field of research.

\bibliographystyle{IEEEbib}
\bibliography{bibliography}

\end{document}